\newcommand{\set}[1]{\ensuremath{\left\{ {#1} \right\}}}
\newcommand{\abs}[1]{\ensuremath{\left| {#1} \right|}}
\newcommand{\exval}[2][]{\ensuremath{\mathop{\mathrm{E}}_{#1} \left[ {#2} \right]}}
\newcommand{\zo}{\ensuremath{\set{0,1}}}
\newcommand{\union}{\ensuremath{\cup}}
\newcommand{\Intersect}{\ensuremath{\bigcap}}
\newcommand{\intersect}{\ensuremath{\cap}}
\newcommand{\st}{\ensuremath{\;\big |\;}}
\newcommand{\bs}{\ensuremath{\backslash}}
\definecolor{coolblack}{rgb}{0.0, 0.18, 0.39}
\DeclareMathOperator*{\argmax}{\arg\!\max}
\newcommand{\type}{\ensuremath{B}}
\newcommand{\strategy}{\ensuremath{\mathcal{X}}}
\newcommand{\result}{\ensuremath{Y}}
\newcommand{\envy}{\ensuremath{Envy}}
\newtheorem{theorem}{Theorem}[section]
\newtheorem{definition}{Definition}
\newtheorem{open}{Open Problem}
\newtheorem{proposition}{Proposition}
\newtheorem{corollary}{Corollary}
\newtheorem{lemma}{Lemma}
\newtheorem{remark}{Remark}
\definecolor{blue(pigment)}{rgb}{0.2, 0.2, 0.6}
\title{Sharing information with competitors}
\author{
	Simina Br\^anzei\footnote{Purdue University, USA. E-mail: \href{mailto:simina.branzei@gmail.com}{simina.branzei@gmail.com}.}
	\and
	Claudio Orlandi\footnote{Aarhus University, Denmark. E-mail: 
		\href{mailto:orlandi@cs.au.dk}{orlandi@cs.au.dk}.}
	\and
	Guang Yang\footnote{Intitute of Computing Technology, Chinese Academy of Sciences, China. E-mail: \href{mailto:guang.research@gmail.com}{guang.research@gmail.com}.}
}
\date{}
\begin{document}
	\maketitle
\begin{abstract}
We study the mechanism design problem in the setting where agents are rewarded using information only. This problem is motivated by the increasing interest in secure multiparty computation techniques. 
More specifically, we consider the setting of a joint computation where 
different agents have inputs of different quality and each agent is interested in learning as much as possible while maintaining exclusivity for information.

Our high level question is to
design mechanisms that motivate all agents (even those with high-quality input) to participate in the computation and we formally study problems such as set union, intersection, and average.
\end{abstract}

\newpage

%
\section{Introduction}
\label{sec:intro}

Secure multiparty computation allows a set of parties to compute any functions on their private inputs. In recent years there has been a boom
in the speed achieved by cryptographic protocols for secure multiparty computation (see 
e.g.,~\cite{
DBLP:conf/crypto/NielsenNOB12,
DBLP:conf/crypto/DamgardPSZ12,
DBLP:conf/crypto/LindellPSY15,
DBLP:conf/ccs/KellerOS16,
DBLP:conf/asiacrypt/Ben-EfraimLO17,
DBLP:journals/tissec/PinkasSZ18} 
and references therein), to the point that start-ups and companies are beginning to offer products based on these technologies~\cite{DBLP:journals/iacr/ArcherBLKNPSW18}.
One question that has not been addressed in the cryptographic community so far is whether parties will have any \emph{incentive} in participating
in such protocols: In traditional multiparty computation tasks, multiple agents wish to evaluate some public function on their private inputs, where all agents are equal and the evaluated result is broadcasted to all of them or at least the honest ones.
However, when viewing through the game-theoretic lens, the function evaluation process can be realized as the exchanging of private information among those agents,
and hence the agents are \emph{not equal}.
For example, an agent with higher influence on the function tends to have a smaller incentive in the cooperation, and in the extreme case a ``dictator'' would have zero incentive;
or even if the function is symmetric, an agent may still be less incentivized because of a high quality private input which provides a better prior than others. An example of a dictator is a player with input zero when the function is \emph{AND}; such a player already knows the output of the computation and can learn nothing from others.

To this end we suggest to consider the procedure fairness (rather than the result fairness) in terms of \emph{information benefit}, which measures how much an agent improves the quality of her own private information by participating.
We believe this is a better characterization of the agent incentives. 
Also from the game-theoretic point of view, it makes sense to consider the agents as rational and self-motivated individuals rather than simply ``good/bad" or ``honest/semi-honest/malicious" as is typically done in cryptographic scenarios.

In this work, we study mechanisms for exchanging information without monetary transfer among rational agents.
These agents are rational and self-motivated in the sense that they only care about maximizing their own utility defined in terms of information.
More specifically, we focus on utility functions that capture the following properties about the behavior of the agents:

\begin{itemize}
	\item \emph{Correctness}: The agents wish to collect information from other agents.
	
	\item \emph{Exclusivity}: The agents wish to have exclusive access to information.
\end{itemize}

The wish to collect information incentivizes cooperation, while the wish for exclusivity deters it.
By unifying the above competing factors, agents aim to strike a balance between the two. The value of exclusivity is a well known concept studied in many areas of economics (e.g. labor economics, economics of the family, etc); see, e.g. ~\cite{SW00} for a study on the role of exclusivity in contracts between buyers and sellers and ~\cite{MSV10} for platform-based information goods.

Utility functions that capture these competing factors are relevant in modeling situations where both cooperation and competition exist simultaneously, 
such as several companies wishing to exchange their private but probably overlapping information, e.g.~training data for machine learning purpose, predictions for the stock market, etc.
Another related example is where several Twitter marketing agents, with distinct but overlapping sets of followers, collaborate on improving influence by re-tweeting each others' tweets.
Here the influence on followers has very similar properties (i.e. easy replicable but non-fungible) as private information except that it is publicly observable now,
and the cooperation is stable only when everyone gets a fair share from their participation. 

We investigate specific information exchanging problems, such as Multiparty Set Union, as well as Set Intersection and Average.
For example, in the set union problem there is a number of players, each owning a set, and the goal is to find the union of the private sets held by all the agents. Set intersection is similarly defined except the goal is to find the intersection.
Since for such problems the result is not Boolean and agents with different quality input should get different results,
the value of result is measured by quality (accuracy) rather than by a Boolean indicator of whether it is the optimal one. 

For the behavior of the agents, many of our results are for the ``all-or-nothing model'' where every agent either fully participates by truthfully submitting their input or refuses to participate. We also have several results for games with few players in the richer model where agents can partially participate, by submitting some but not all of their information, as well as open questions.
The all-or-nothing model is implemented in practice when the inputs are authenticated by some trusted authority (for instance by mean of digital signatures).
or at least easy to verify afterward so that dishonest agents can be detected and punished eventually (such as with the court or future rounds of repeated games).
For example, any deviating behavior is transparent in the ``tweet for tweet'' example.
The participants send their private input to the trusted mediator\footnote{In the secure multiparty computation setting this trusted party is usually replaced by a cryptographic protocol. For the sake of simplicity, we do not further consider cryptographic protocols in this work.} (i.e. ``principal") 
who runs a publicly known protocol (mechanism) to decide the payoff of each agent. 
Here the payoffs are customized pieces of information 
since we are studying the information exchanging mechanism without money.

\subsection{Our Contribution}

In this paper we propose a framework for non-monetary mechanism design with utility functions unifying both preferences of correctness and exclusivity.
Let $N = \{1, \ldots, n\}$ be a set of players (agents). Suppose each player has some piece of information, the details of which we intentionally leave informal for now. 
Given some mechanism $M$ that the players use to exchange information among themselves, we define the information benefit $v_i$ of a player $i$ to be the additional ``information'' gained by $i$ after participating in $M$. For example, in the case of the set union problem, where each agent owns a set of elements and tries to learn additional elements from other agents, this gain could be the number of additional elements learned by a player compared to what that player already knew. 

\medskip

The utility function will capture the tension between the wish to learn and the wish for exclusivity and the simple instantiation that we focus on is given by $u_i = v_i - \max_{j \in N \setminus \{i\}} v_j$. Thus each player wishes to learn as much as possible while maintaining exclusivity over the information, which is captured by minimizing the amount obtained by others. Note this definition is connected to the notion of envy-freeness; in particular, it captures the maximum ``envy'' that a player $i$ could have towards any another player, and the goal is to reduce envy.
\medskip

Our technical contribution is to design mechanisms for natural joint computation tasks such as 
\emph{Multiparty Set Union}, as well as \emph{Set Intersection} and \emph{Average}. We focus on mechanisms that incentivize players to submit the information they have as well as ensure properties such as Pareto efficiency\footnote{Pareto effiency ensures that no agent can be better off without making someone else worse off.} of the final allocation. 

\medskip

In the Multiparty Set Union Problem each player owns a set $x_i$ drawn from some universe $\mathcal{U}$. The utility functions are as described above. The strategy space of a player consists of sets they can submit to the mechanism. We assume that agents can hide elements of their set, but not manufacture elements they don't have (i.e. there is a way to detect forgery). The question is to design a mechanism that incentivizes the players to show their set of elements to others.

\begin{theorem}
There is a truthful and Pareto efficient mechanism for set union among $n=3$ players. The mechanism runs in polynomial time.
\end{theorem}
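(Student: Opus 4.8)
The plan is to exhibit an \emph{equalization mechanism}. The principal collects the submitted sets $S_1,S_2,S_3$, forms the union $U=S_1\cup S_2\cup S_3$, computes the common target level $c=\min_i\big(|U|-|S_i|\big)$, and returns to player $i$ her submitted set $S_i$ together with $c$ fresh elements of $U\setminus S_i$ (picking, say, the $c$ lexicographically smallest ones). When every player reports truthfully, $S_i=x_i$ and the information benefit of player $i$ is exactly $v_i=c=\min_j(|U|-|x_j|)$; in particular all $v_i$ are equal, so $u_i=v_i-\max_{j\ne i}v_j=0$ for everyone. The mechanism only performs set operations and counting, so it runs in polynomial time; the content of the theorem is therefore the two equilibrium/efficiency properties.

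For Pareto efficiency, I would first note that the attainable profiles of information benefits are precisely the integer points of the box $\prod_i[0,d_i]$ with $d_i=|U|-|x_i|$: a player can be shown any subset of the elements she is missing, and these choices are independent across players because showing an element to one player does not constrain what is shown to another. Hence it suffices to show that the profile $(c,c,c)$ output by the mechanism is Pareto efficient inside this box. If some feasible profile $v'$ had $u_i(v')\ge u_i(v)=0$ for all $i$, then $v'_i\ge\max_{j\ne i}v'_j$ for every $i$; with three players this forces $v'_1=v'_2=v'_3$, whence $u_i(v')=0$ for all $i$ and there is no strict improvement. So the output allocation is Pareto efficient.

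The bulk of the argument is truthfulness, which I would establish by showing that no unilateral deviation strictly raises a player's utility. Fix player $i$ reporting a proper subset $S_i\subsetneq x_i$ and split the withheld elements $x_i\setminus S_i$ into the part $H_u$ that no other player holds and the part $H_s$ that at least one other player holds. Withholding $H_u$ shrinks the reported union to $U'=U\setminus H_u$, which lowers the reported deficit $|U'|-|S_j|$ of each \emph{other} player $j$ by $|H_u|$ while leaving $i$'s own reported deficit equal to its true value $d_i$; withholding $H_s$ keeps the reported union equal to $U$ but inflates $i$'s reported deficit to $d_i+|H_s|$. Writing $\hat c$ for the recomputed target level, the two facts to verify are: (i) each of the other two players is still handed $\hat c$ elements, all genuinely new to her, so her real benefit is exactly $\hat c$; and (ii) the pool $U'\setminus S_i$ from which player $i$ draws her $\hat c$ elements contains only $d_i$ genuinely new elements (the remaining $|H_s|$ being elements of $H_s$ that $i$ already knows), so her real benefit is at most $\min(\hat c,d_i)\le\hat c$. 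Combining (i) and (ii), player $i$'s realized utility is $v_i^{\mathrm{real}}-\hat c\le 0$, no more than the $0$ she gets by reporting truthfully.

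The step I expect to be the main obstacle is exactly this accounting, and in particular ruling out the tempting ``informational free-riding'' deviation in which a player withholds elements she uniquely owns precisely in order to depress the others' benefits and thereby boost her own utility through the exclusivity term. What saves the mechanism is that, whatever the recomputed level $\hat c$ happens to be, it delivers exactly $\hat c$ genuinely new elements to \emph{each} of the two competitors while delivering at most $\hat c$ to the deviator, so a deviation can never improve her relative standing. Having only two competitors is what makes this clean: the $\max$ in the utility is over a set small enough that the equalized profile is simultaneously feasible and uniquely pinned down by the non-negativity requirement. I would close by discussing whether one can do better --- e.g.\ reveal the full union to the lowest-quality player, or more generally reveal strictly more information --- without sacrificing truthfulness or Pareto efficiency, which is where the restriction to three players genuinely bites and which connects to the partial-participation model and open questions mentioned earlier.
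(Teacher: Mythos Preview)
Your approach is genuinely different from the paper's and considerably simpler. The paper builds a round-by-round exchange mechanism (Mechanism~\ref{mech: 3psu}) with a three-way case split on the intersection pattern, proves it truthful by arguing each ``round'' is a fair trade, and then patches it to obtain Pareto efficiency by topping up the weakest player. You instead go straight to an equalization mechanism: hand every player $c=\min_i\bigl(|U|-|S_i|\bigr)$ fresh elements. This is essentially the idea behind the paper's Lemma~\ref{observation1} (equal information benefits force $u_i=0$ for all, hence both truthfulness and Pareto efficiency), but that lemma is stated only for all-or-nothing players; what you add is a direct accounting argument that the equalization survives the richer strategy space when $n=3$. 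Your Pareto-efficiency argument is clean and correct.

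There is, however, a gap in your truthfulness argument: it establishes that truth is a best response \emph{to truthful opponents}, i.e.\ a Nash equilibrium, not the dominant-strategy property the paper defines as ``truthful.'' The issue is claim~(i): ``each of the other two players is still handed $\hat c$ elements, all genuinely new to her.'' This is only guaranteed when $S_j=x_j$; if player $j$ has herself hidden elements, some of the $\hat c$ elements drawn from $U'\setminus S_j$ may lie in $x_j\setminus S_j$ and are not genuinely new to her. Your splitting into $H_u$ (``no other player \emph{holds}'') and $H_s$ likewise tacitly identifies the opponents' reports with their true sets. To close the gap you would need to compare $u_i(x_i,S_{-i})$ with $u_i(S_i,S_{-i})$ for arbitrary $S_{-i}$: under truth one still has $v_i^{\mathrm{real}}=c_T$ and $v_j^{\mathrm{real}},v_k^{\mathrm{real}}\le c_T$, so $u_i(x_i,S_{-i})\ge 0$; what remains is to bound $u_i(S_i,S_{-i})$ above by this quantity, and that requires tracking how the lexicographic selection interacts with the opponents' hidden elements as $S_i$ varies. (In fairness, the paper's own proof of Theorem~\ref{thm:3set} is informal on exactly this point too.)
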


We leave open the general mechanism design question for any number of players.

\begin{open}
	Is there a truthful polynomial time mechanism for set union for any number of players? Are there randomized such mechanisms?
\end{open}

However, we manage to solve this problem for the special case where each player can either submit its whole set or the empty set, i.e. cooperate or not. We call this the ``all-or-nothing'' model. 

\begin{theorem} 
	There is a truthful, Pareto efficient, and welfare maximizing mechanism%
	\footnote{Welfare maximization is achieved by maximizing over all the Pareto efficient outcomes.} for set union among any number $n$ of all-or-nothing players. The mechanism runs in polynomial time for any fixed $n$. 
\end{theorem}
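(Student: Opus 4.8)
The plan is to characterize all Pareto efficient outcomes in the all-or-nothing model, show that welfare maximization over them is computable in polynomial time for fixed $n$, and then argue that the resulting allocation can be realized by a truthful mechanism. In the all-or-nothing model a strategy profile is just a subset $S \subseteq N$ of players who submit their full set (the rest submit $\emptyset$). Given the submitted sets, the mechanism must decide what information to hand back to each player; write $y_i$ for the set returned to player $i$. The natural benchmark is that player $i$ who submits gets to learn the union $\bigcup_{j \in S} x_j$, but the mechanism is free to return less in order to control exclusivity. So the design space is: pick $S$ (the participation set, forced by strategies) and then pick output sets $(y_i)_{i \in N}$ with $y_i \supseteq x_i$ and $y_i \subseteq \bigcup_{j \in S} x_j$; the information benefit is $v_i = |y_i \setminus x_i|$ and the utility is $u_i = v_i - \max_{j \neq i} v_j$.

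First I would nail down the structure of Pareto efficient allocations for a fixed participation set $S$. Since elements are the atomic unit and different elements are interchangeable, the only thing that matters about $(y_i)$ is, for each element $e$, the set of players who ``already had it'' versus those who ``receive it,'' and the counts $v_i$. Pareto efficiency should force that every participating player receives \emph{something} maximal in a suitable sense — intuitively, withholding an element from a player only helps exclusivity if it simultaneously changes someone's $\max_{j\neq i} v_j$ term, so a combinatorial exchange argument should pin down the candidate efficient profiles to a short list parametrized by a few integers (how many elements each player is ``granted''). Then welfare $\sum_i u_i = \sum_i v_i - \sum_i \max_{j\neq i} v_j$ is a function of these integers, and I would optimize it: with $n$ fixed there are only polynomially many relevant configurations (the $v_i$ values are bounded by $|\mathcal{U}|$ but, more importantly, the combinatorial type of an optimal solution depends only on the sorted order of the $|x_i \setminus \bigcup_{j \in S, j \neq i} x_j|$ and on $n$), so an explicit case analysis or a small LP/flow computation gives the welfare-maximizing target allocation for each $S$.

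Next comes truthfulness. I would define the mechanism as: given the reported sets, let $S$ be the set of players who reported a nonempty set, compute the welfare-maximizing Pareto efficient output profile for $S$ as above, and return it. To show no player wants to deviate, note a player's only deviations are (a) withhold some elements (report a subset) or (b) abstain entirely. Withholding elements is handled by the familiar observation that in union-type problems hiding an element $e$ cannot increase your own $v_i$ (you never learn $e$ from others if you don't reveal it, and revealing it can only weakly increase the common pool) while it can only weakly \emph{increase} the $v_j$ of others or leave them equal — so hiding weakly hurts $u_i$; the delicate point is to check the mechanism's allocation rule doesn't create a perverse incentive where shrinking the pool lowers some rival's $v_j$ by more than it lowers your own $v_i$. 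This is exactly the step I expect to be the main obstacle: one must verify that the particular welfare-maximizing selection is \emph{monotone} in the reported pool, i.e. that enlarging your report weakly improves your utility under the chosen allocation rule, and this may require choosing the welfare-maximizing outcome carefully (e.g. with a canonical tie-break) rather than arbitrarily. Deviation (b), abstaining, is a special case of withholding everything and follows once (a) is established.

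Finally I would assemble the pieces: Pareto efficiency and welfare maximization hold by construction of the allocation rule; truthfulness follows from the monotonicity argument above; and the running time is polynomial for fixed $n$ because, after grouping elements by their ``ownership pattern'' (a subset of $N$, so at most $2^n$ patterns, a constant), the optimization reduces to an integer program in $O(2^n)$ variables with constraints of bounded size, solvable by brute force or by a direct formula. I would present the allocation rule explicitly first, then prove the three properties in the order Pareto efficiency, welfare maximization, truthfulness, with the bulk of the work in the truthfulness proof and its monotonicity lemma.
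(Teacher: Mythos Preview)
Your plan has a genuine gap at the interface between steps (2) and (3). You propose to first maximize welfare over the Pareto efficient outcomes and then argue truthfulness via a monotonicity lemma, but these two steps are in tension: if welfare means $\sum_i v_i$ (as in the paper), the welfare-maximizing Pareto efficient outcome is simply ``everyone learns the full union,'' and that allocation is \emph{not} truthful---a player with a large set may prefer to abstain, since she gains little while everyone else gains a lot. (The paper's introduction explicitly discusses why the ``share everything'' solution fails.) So truthfulness does not come for free from Pareto efficiency plus welfare maximization; it must be built into the choice of $V$.

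The idea you are missing is the paper's recursive constraint on the common benefit level $V$. When player $k$ abstains, the remaining $n-1$ players run the mechanism among themselves and receive benefit (at most) $V_{-k}$, defined recursively. Player $k$'s utility from abstaining is therefore $-V_{-k}$, while from participating it is $v_k - V$; truthfulness requires $v_k \ge V - V_{-k}$. Since $v_k \le |z_{-k}\setminus x_k|$ (where $z_{-k}=\bigcup_{j\ne k}x_j$), this forces
\[
V \;\le\; \min_{k}\Bigl\{\,|z_{-k}\setminus x_k| + V_{-k}\,\Bigr\},
\]
and the mechanism takes $V$ to be the largest value satisfying this (capped at $\max_k |z_{-k}\setminus x_k|$). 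The recursion, not a monotonicity-in-the-pool argument, is what delivers truthfulness, and it is also what makes the running time polynomial only for fixed $n$ (the recursion tree has $n!$ leaves). Your sketch never arrives at this constraint; the ``canonical tie-break'' you mention would not salvage the approach, because the failure is not a tie-breaking issue but the fact that the unconstrained welfare optimum is strictly non-truthful. A smaller point: in the all-or-nothing model your deviations (a) and (b) coincide, so the partial-withholding discussion is moot here.
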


We further show that this mechanism satisfies several other desirable properties, such as treating equal agents equally and rewarding more agents that contribute more.

\medskip

Beyond multiparty set union, we also consider two case studies of problems with sets. The first is a set intersection problem, where each player owns a connected set (interval) on the real line. The players have to find an element in the intersection of all the sets and are promised that such an element exists. A high level example of this problem is when the agents are trying to find a gold mine, and each agent has an estimate of the location of the mine (with a radius around the correct location). The goal of the agents is to improve their estimate of the gold mine, while not revealing the location to too many players if possible.

\begin{theorem}
	There is a truthful polynomial time mechanism for interval intersection among any number $n$ of all-or-nothing players.
\end{theorem}

The second case study is a point average problem, where each player has a point and the goal is to compute the average value of their inputs.

\begin{theorem}
There is a truthful polynomial time mechanism for the point average problem for any number $n$ of all-or-nothing players.
\end{theorem}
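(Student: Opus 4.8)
The plan is to give an explicit, naive mechanism and to verify truthfulness by a one-line inequality. Model the problem as follows: player $i$ holds a real number $x_i$, the target is the mean $a=\frac1n\sum_{j=1}^n x_j$, and in the all-or-nothing model each player either submits her true $x_i$ or abstains. If $S\subseteq N$ is the participating set, the mechanism can compute $\sigma_S=\sum_{j\in S}x_j$ and the partial mean $\sigma_S/|S|$, but it recovers $a$ exactly only when $S=N$. Following the information-benefit convention of the paper, player $i$'s prior estimate of $a$ is her own value $x_i$, so her prior error is $|x_i-a|$ (squared error works equally well), $v_i$ is the reduction in error achieved, and an abstaining player receives nothing, so $v_i=0$.

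The mechanism I would use hands every participating player the mean of the submitted values, $\hat a_j=\sigma_S/|S|$ for $j\in S$, and gives abstainers nothing; it is obviously polynomial time --- one summation and one division. When all $n$ players participate, each receives the exact mean, so $v_i=|x_i-a|$. When player $i$ alone abstains, every remaining player $j$ receives the $(n-1)$-point mean $(na-x_i)/(n-1)$, whose error against $a$ is exactly $|x_i-a|/(n-1)$ --- the same value for every $j$ --- so $v_j$ changes from $|x_j-a|$ to $|x_j-a|-|x_i-a|/(n-1)$, and $\max_{j\ne i}v_j$ therefore drops by exactly $|x_i-a|/(n-1)$.

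Truthfulness --- the all-participate profile being an equilibrium --- now reduces to checking, for each $i$, that $u_i=v_i-\max_{j\ne i}v_j$ does not increase if $i$ switches to abstaining. That switch changes $u_i$ by $-|x_i-a|+|x_i-a|/(n-1)$: the first term is the benefit $i$ forgoes, the second is the exclusivity $i$ regains by denying the others the exact mean. Since $|x_i-a|\ge |x_i-a|/(n-1)$ for every $n\ge2$, abstaining is never profitable (and is strictly worse whenever $x_i\ne a$ and $n\ge3$), which proves the theorem.

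I do not expect a serious obstacle; the substance is just the observation that the ``leakage'' a player causes by joining --- the $1/(n-1)$ shrinkage of everyone else's error toward the truth --- is always dominated by her own gain. Two points need a little care. First, off the equilibrium path a participating player can be handed an estimate worse than her own value (when the partial mean straddles $a$ unfavorably), so one should argue only about single deviations from the all-participate profile --- along which the $(n-1)$-mean's error is uniformly controlled --- rather than claim participation is a dominant strategy. Second, the model hypothesis is essential: a player allowed to misreport could submit the shifted value $x_i+n\varepsilon$, recover $a$ perfectly since she knows her own shift, yet degrade every other estimate by $\varepsilon$, so $\max_{j\ne i}v_j$ falls and her $u_i$ strictly improves. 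This is exactly why the statement is confined to the all-or-nothing model with authenticated inputs, and I would make that assumption explicit before defining the mechanism.
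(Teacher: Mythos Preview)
Your proposal is essentially the same as the paper's: the mechanism returns the empirical mean of the submitted points to every participant and nothing to abstainers, and the analysis bounds a deviator's loss against the uniform drop $|x_i-a|/(n-1)$ in everyone else's benefit. The only cosmetic difference is that you use absolute error while the paper uses squared loss $v(y)=-|y-\overline{a}|^2$; the paper explicitly remarks that its argument extends to $|y-\overline{a}|^p$ for any $p>0$, so your $p=1$ choice is covered.

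One point worth flagging: the paper defines ``truthful'' as \emph{dominant-strategy} truthfulness, yet its own proof, like yours, only analyzes unilateral deviation from the all-participate profile (i.e., it establishes a Nash equilibrium). You are actually more careful than the paper here, since you explicitly restrict to ``single deviations from the all-participate profile'' and note that off-path a participant can receive an estimate worse than her prior. So your caveat is well placed, but be aware that it is a genuine weakening relative to the paper's stated definition of truthfulness---one the paper itself silently makes. Your closing remark about misreporting (submitting $x_i+n\varepsilon$) is a nice addition that the paper does not spell out.
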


We also provide a more general theorem for arbitrary value functions (e.g. that do not necessarily count the number of elements in a set); for this see Appendix~\ref{app:general}.

\medskip

Finally, two more high-level remarks are in order.

\medskip

\noindent \emph{Why not maximize social welfare?} A trivial solution to problems such as set union can be to have everyone learn everything (i.e. maximize the sum of information gains). 
In traditional settings such as auctions or elections 
it is unlikely that every agent maximizes their information benefit simultaneously since their ideal outputs are usually conflicting, 
e.g.~there is only one indivisible good that cannot be assigned to more than one agent.
However, in the world where information replaces material goods, it becomes possible to duplicate the information at (nearly) zero cost such that every agent gets all information and hence maximizes their utility at the same time.
This straightforward mechanism only works if all agents are selfless and choose to report truthfully. 
However, it is unfair in the sense that the more one agent contributes, the less benefit they could get (since the information benefit is bounded by the whole information minus their private information).
Furthermore, the straightforward mechanism fails badly when agents take exclusivity into consideration: 
e.g. the dominant strategy would be ``revealing nothing to the mechanism but combining the output with the private input afterward" and eventually the equilibrium becomes that no exchange happens at all (similar to the  ``rational secret sharing" problem discussed in \cite{HT04,IML05,KN08}) when partial participation and strategic lies are allowed;
and even in the all-or-nothing model an agent may prefer not participating according to their own utility function if their advantage over other competing agents would decrease.

\medskip

\noindent \emph{A Note on Mechanism Design.} The intuition behind our constructions is that every single agent, when participating in the cooperation, should get a benefit no less than the loss they could cause to others by not participating.
At a first glance it might seem that the ``loss to others'' inflicted by a non-participating agent would be bounded by the exclusive information of that agent.
However, it turns out that agents contribute much more to the mechanism than simply their private inputs. In particular, the participation of an agent may increase social welfare by giving incentives for participation to other agents with ``better'' inputs. Concretely, an agent $i$ with a high quality input might choose to join the computation, or reveal more of his private information, because, by doing so, they can reduce the information benefit of some other agent $j$ (which is rational when it reduces $i$'s own exclusivity loss). 

Therefore, the key idea behind our constructions is to characterize the marginal contribution of every agent and assign information accordingly so that nobody prefers to leave the cooperation (and in the meanwhile we aim to maximize the social welfare among all stable allocations).
For example, this idea is instantiated as a round-by-round exchange mechanism for the Three-Party Set Union problem (in Section~\ref{sec:union3}), such that in every single ``round'' of exchange each agent gains more benefit than he offers to others.

\subsection{Related Work}

Our setting is reminiscent of cooperative game theory and the well-known solution of Shapley value \cite{shapley,roth1988shapley,aumann1989game},
except that now the agents are rewarded with information instead of money.
There are two main distinctions:
a) Information can be duplicated, for free or with negligible cost; 
b) Every piece of information is unique whereas money is fungible, e.g.~the same piece of information could have different values for different agents. 
The first property results in an unfixed total profit (sum of all agents' payoffs) and so breaks the intuition of ``distribute the total surplus \emph{proportionally} to each agent's contribution" used in Shapley value.
The second property requires the mechanism to specify not only the \emph{amount} of information but also the \emph{details} of information allocation.
In particular, the information already contained in an agent's input cannot be used to reward that agent.
Such a property also leads to a subtle dilemma --- 
the more an agent contributes, the less they can get as a reward from the mechanism  ---
e.g.~an agent with all information cannot get new information from other agents.
Therefore, the mechanism must be able to motivate the most informed agents even though they may not benefit as much as those that know less (i.e. with lower quality inputs). A different line of work has studied the problem of sharing information when the inputs are substitutes or complements~\cite{CW16}, which defined the value of information (and of a marginal unit of information) and instantiated it in the context of prediction markets.


Our model can be seen as an extension of the \emph{non-cooperative computation} (NCC) framework 
and \emph{informational mechanism design} (IMD) introduced 
in \cite{ST05,MPS03}, 
where they characterize Boolean functions that are computable by rational agents with non-monetary utility functions defined in terms of information. 
In their model, the agents are trying to compute a public Boolean function on their private inputs with the help of a trusted center. Every agent claims their type (truthfully or not) to the center, and gets a response from the center
(typically but not necessarily the Boolean function evaluated on claimed types).
Agents may lie or refuse to participate, and they can apply any \emph{interpretation function} (on the response from the center and their true input, so as to correct a wrong answer possibly caused by an earlier false declaration).
In the setting of \cite{ST05}, the agents have a two-tiered preference of correctness preceding exclusivity%
\footnote{\cite{MPS03} considers two more facets, i.e.~privacy and voyeurism, but still in lexicographic ordering.},
i.e.~they are interested in misleading others only if this would not hurt their own correctness,
whereas we generalize this lexicographic preference to a utility function incorporating both components
(The lexicographic preference is a special case when one component is assigned a very small weight).  
Another extension is that we consider non-Boolean functions in this work and allow distinct responses to different agents, which significantly enriches the space of candidate mechanisms. 


The line of work \cite{HT04,IML05,KN08} focuses on the cryptographic implementation of truthful mechanisms for secret sharing and multiparty computation by rational agents without a trusted mediator. 
In their setting there is an ``issuer" who authenticates the initial shares of all agents so that the agents cannot forge a share (just as in the all-or-nothing model).
Then the agents use simultaneous broadcast channels (non-simultaneous channels are also considered in \cite{KN08}) to communicate in a round-by-round manner.
Since all messages are broadcasted in this setting, a rational agent  tends to keep silent so that they can receive others' information without revealing their own and hence possibly gain advantage in exclusivity.
Therefore, much of the efforts and technical depth along this line is spent on catching dishonest agents (who do not broadcast their shares when they are supposed to),
based on the key idea that in any given round the agents do not know whether this is just a test round designed to detect cheaters, or whether it is the final round for the actual information exchange.
\cite{IML05} achieve a fair, rational secure multiparty computation protocol which prevents coalitions and eliminates subliminal channels, 
despite the drawback of requiring special purpose hardware such as ideal envelopes and ballot boxes.  
However,  all of the these works assume the two-tiered preference of correctness and exclusivity as in \cite{ST05},
where in particular the correctness dimension is Boolean, i.e.~either ``correctly computed" or not.
As a result, these works fall into the category of 
``implementing cryptographic protocols with rational agents"
rather than the more game-theoretic topic ``informational mechanism design" which we address in this paper.

There is another line of work \cite{MNT09,NOS12} on mechanism design with privacy-aware agents who care about their privacy rather than exclusivity. 
The consideration of privacy is relevant in many applications but technically orthogonal to what we study in this work. (In our work, the privacy of the inputs is only a tool towards limiting the loss of utility due to the exclusivity preference, not a goal in itself).


The recent works of \cite{DBLP:conf/aaai/ChenNW15} and \cite{DBLP:conf/innovations/AzarGP16}
investigate non-monetary mechanisms for cooperation among competing agents.
However, an essential difference is that they consider a sequential delivery of outputs to different agents, such that the utility function is not merely in terms of information but also depends on the time or order when the output is delivered.
For example, the ``treasure hunting problem" in \cite{DBLP:conf/aaai/ChenNW15} is in particular very similar to the multiparty set intersection problem, except that in treasure-hunting only the first agent finding the common element gets positive utility while all others get zero.

\section{Multiparty Set Union}

Let $N = \{1, \ldots, n\}$ be a set of players. There is a universe $\mathcal{U}= \{u_1, \ldots, u_m\}$ of possible numbers, from which each player $i$ owns a subset $S_i \subseteq \mathcal{U}$ that is private to the player. The goal of the players is to obtain more elements of the universe from other players by sending elements from their own set in exchange.

\medskip

We study the problem of designing mechanisms that incentivize the participants to share their information with each other.
A mechanism $\mathcal{M}$ will take as input from each player $i$ a set $x_i \subseteq S_i$ and output a vector $\vec{y} = \mathcal{M}(\vec{x})$, so that the $i$-th entry of this vector contains the set received by player $i$ after the exchange. 
\medskip

\emph{Strategies}. The strategy of a player is the set it submits to the mechanism.  Players can hide elements (i.e. submit a strict subset of their true set), but not submit elements they don't actually have. A special case we will study in more depth is when the strategies of the players are ``all-or-nothing'', i.e. $x_i \in \{\emptyset, S_i\}$. The input of each player to the mechanism is sent through a private authenticated channel to the center.

\medskip

\emph{Utility}. We say the ``information benefit'' that player $i$ receives from sending his set $S_i$ to the mechanism is the number of new elements that $i$ obtains from the exchange: $v_i(\vec{x}) = |\mathcal{M}_i(\vec{x}) \setminus x_i|$. The \emph{utility} of the player is then defined as the minimum difference between his own information benefit and that of any other player, formally given by $
u_i(\vec{x}) = v_i(\vec{x}) - \max_{j \in N \setminus \{ i\}} v_j(\vec{x}).
$

The intuition is that each player wishes to learn as much as possible while maintaining exclusivity, which is captured by minimizing the amount of information obtained by the other players. This utility function is closely tied with the notion of envy as it compares the value for a player with the maximum value of any other player and the aim is to compute outcomes that are (approximately) envy-free.

\medskip

\emph{Incentive compatibility and Efficiency}. We are interested in mechanisms that incentivize players to share their information and will say that a mechanism is \emph{truthful} if truth telling is a dominant strategy for each player regardless of the strategies of the other players. 
An allocation (outcome) is \emph{Pareto efficient} (or \emph{Pareto optimal}) if there is no other outcome where at least one agent is strictly better off and nobody is worse off.

\medskip

\emph{Fairness}. Some of our mechanisms also satisfy fairness and the fairness notions we consider are \emph{symmetry} and \emph{strong dominance}. Symmetry requires that if multiple agents report inputs of equivalent quality, then they get the same amount of information benefit (and so the same utility). 
 Strong dominance stipulates that if the information reported by an agent is inferior to the information reported by another agent under some partial order, then the result sent to the first agent is also (weakly) inferior to the result sent to the second agent under that order.

\subsection{Two Players} \label{sec:union2}

As a warm-up, we provide a simple solution to the exchange problem for $n=2$ players.

\begin{proposition}
There is a truthful polynomial time mechanism for the set union problem between two players.
\end{proposition}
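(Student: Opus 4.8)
The plan is to exploit the fact that with only $n=2$ players the utility profile is constant-sum: $u_1(\vec x) = v_1(\vec x)-v_2(\vec x)$ and $u_2(\vec x) = v_2(\vec x)-v_1(\vec x) = -u_1(\vec x)$. Hence if we can design a mechanism for which $v_1(\vec x) = v_2(\vec x)$ for \emph{every} profile $\vec x = (x_1,x_2)$, then $u_1 \equiv u_2 \equiv 0$, so for each player truth-telling is at least as good as any deviation against every opponent strategy, i.e.\ truth-telling is a (weakly) dominant strategy automatically. So the whole task reduces to exhibiting a polynomial-time mechanism that always hands the two players the same number of new elements.

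The mechanism I would use is a balanced, round-by-round swap. On input $x_1 \subseteq S_1$, $x_2 \subseteq S_2$, let $D_1 = x_1 \setminus x_2$ and $D_2 = x_2 \setminus x_1$ be the elements each player submitted that the other did not, and set $k = \min(\abs{D_1}, \abs{D_2})$. Fixing an arbitrary order on $\mathcal{U}$, define $\mathcal{M}_1(\vec x)$ to be the $k$ smallest elements of $D_2$ and $\mathcal{M}_2(\vec x)$ to be the $k$ smallest elements of $D_1$ (equivalently: in round $t = 1, \dots, k$ each player reveals its $t$-th still-undisclosed submitted element to the other and simultaneously receives the other's $t$-th, and the exchange halts as soon as one side has nothing novel left to offer). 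Since $\mathcal{M}_1(\vec x) \subseteq D_2$ is disjoint from $x_1$ and $\mathcal{M}_2(\vec x) \subseteq D_1$ is disjoint from $x_2$, we get $v_1(\vec x) = \abs{\mathcal{M}_1(\vec x) \setminus x_1} = \abs{\mathcal{M}_1(\vec x)} = k = \abs{\mathcal{M}_2(\vec x)} = v_2(\vec x)$ for all $\vec x$, so $u_1 \equiv u_2 \equiv 0$ and truth-telling $x_i = S_i$ is (weakly) dominant for both players. Computing the two set differences, sorting, and selecting $k$ elements each is polynomial in $n$ and $m$, so the mechanism runs in polynomial time. I would also note that it is non-degenerate — when both players report truthfully each learns $k = \min(\abs{S_1 \setminus S_2}, \abs{S_2 \setminus S_1})$ genuinely new elements — and that, the game being constant-sum, every outcome (this one included) is trivially Pareto efficient, with symmetry holding as well.

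There is essentially no technical obstacle here; the proposition is a warm-up, and the only point that deserves attention is \emph{why the cap $k$ is needed}. The naive ``give everyone everything'' mechanism ($\mathcal{M}_1(\vec x) = D_2$, $\mathcal{M}_2(\vec x) = D_1$, uncapped) is \emph{not} truthful: a player can withhold one of its exclusive elements, strictly shrinking the opponent's benefit $v_j$ while leaving its own $v_i$ untouched, so under that mechanism the dominant strategy is to report $\emptyset$ and no exchange ever happens. Capping each side at $k = \min(\abs{D_1}, \abs{D_2})$, symmetrically between the two players, is exactly what removes this asymmetry and neutralizes every incentive to deviate, and once the constant-sum observation is in place the argument above is routine.
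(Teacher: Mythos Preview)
Your proposal is correct and takes essentially the same approach as the paper: both construct the ``balanced swap'' mechanism that gives each player exactly $k=\min(\abs{x_1\setminus x_2},\abs{x_2\setminus x_1})$ new elements from the other's exclusive set, so that $v_1=v_2$ on every profile and hence $u_1=u_2=0$, making truth-telling weakly dominant. Your write-up is in fact a bit more explicit than the paper's---you spell out the constant-sum observation and the reason the cap $k$ is essential---but the mechanism and the underlying argument are identical.
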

\begin{proof}
Without loss of generality, we can assume the set owned by the second player is larger: $|x_1|\le |x_2|$. Let $y_2=x_1 \cup x_2$ and $y_1= x_1 \cup y'_1$, where $y_1'$ is a set chosen so that $y'_1\subseteq x_2 \setminus x_1$ and $|y'_1|=|x_1 \setminus x_2|$.
%
Then players $1$ and $2$ can fairly exchange their exclusive elements until one of them has used up his exclusive elements. Note this type of exchange performed over multiple rounds can in fact be done in an atomic way by the principal.

It is immediate that this mechanism ensures both agents get the same information benefit: $v_1=v_2=| x_1 \setminus x_2| \ge 0$ and it is weakly dominant for them to report their true information.
\end{proof}

\subsection{Three Players} \label{sec:union3}

For three players the problem becomes more subtle, as the mechanism must specify the order of pairwise exchanging, the number of exchanged elements, and, more importantly,
which elements are exchanged. We have the following theorem.

\begin{theorem} \label{thm:3set}
There is a truthful polynomial time mechanism for set union among $n=3$ players.
\end{theorem}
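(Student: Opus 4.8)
The plan is to generalize the two-player "fair exchange" idea to three players by building the output through a sequence of small exchange steps, each of which is itself individually balanced in information benefit. Let me think about the structure. We have three sets $x_1, x_2, x_3$ submitted by the players (subsets of their true sets $S_1, S_2, S_3$). Partition the universe of submitted elements into regions by which players hold them: the "private" regions $P_i = x_i \setminus (x_j \cup x_k)$, the "pairwise" regions $P_{ij} = (x_i \cap x_j) \setminus x_k$, and the "common" region $P_{123} = x_1 \cap x_2 \cap x_3$. Only $P_i$ and $P_{ij}$ matter for who can gain what: player $i$ can potentially learn elements of $P_j$, $P_k$, and $P_{jk}$.

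First I would handle the pairwise-exchange phase. For each pair $\{i,j\}$, there is a set of elements that only $i$ has and only $j$ has relative to each other; but we must be careful because player $k$'s participation affects the accounting. The key idea, as the paper's own "Note on Mechanism Design" paragraph hints, is a round-by-round mechanism where in each round every participating agent gains at least as much as it gives to the others. Concretely, I would process exchanges in rounds: in each round, pick one element to give to each deficient player from some other player's exclusive stock, choosing elements so that the vector of information benefits $(v_1, v_2, v_3)$ stays as balanced as possible — never letting any single $v_j$ run ahead in a way that would make some other player prefer to drop out. Because dropping out only removes that player's exclusive contributions $P_i$ (and changes which region some $P_{ij}$ elements fall into), the loss an agent $i$ can inflict by leaving is bounded, and I want to guarantee each agent's realized benefit is at least that potential loss.

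The main technical content — and the main obstacle — is choosing *which* elements go to whom so that truthfulness holds against *hiding* strategies, not just against non-participation. A player $i$ who hides an element $e \in P_i$ changes the region structure: if $e$ was going to be given to $j$ and $k$, hiding it reduces $v_j$ and $v_k$, which could *help* $i$'s utility $u_i = v_i - \max_{j\ne i} v_j$. So the mechanism must be designed so that hiding any element never strictly helps: intuitively, the amount player $i$ receives should not depend (favorably) on the elements $i$ reveals beyond what's needed to "pay" for $i$'s own gains, and the reductions in $v_j, v_k$ caused by hiding must be offset by a matching reduction in $v_i$ or by a compensating re-routing. I expect to prove this by a careful case analysis on the relative sizes of the exclusive regions, together with a monotonicity/exchange argument showing that the round-by-round greedy allocation is, step for step, the one that a truthful player weakly prefers. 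Finally I would verify the mechanism runs in polynomial time: there are at most $m$ rounds and each round does $O(1)$ bookkeeping over the regions, so the total work is $O(m)$ after an initial $O(nm)$ pass to compute the regions, and I would conclude by checking Pareto efficiency follows because no element that some player still lacks is left un-exchanged unless handing it over would violate a balance constraint that is itself tight.
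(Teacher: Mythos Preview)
Your high-level framework matches the paper's: decompose by regions $P_i,P_{ij},P_{123}$ and run a sequence of balanced exchange rounds so that each round gives every participant the same marginal benefit. That is exactly the intuition the paper formalizes. However, your proposal stops precisely where the real content begins: you never actually specify the mechanism, and ``greedy keep-balanced'' is not a mechanism. As you yourself note, truthfulness against \emph{hiding} depends on \emph{which} elements are routed to whom; a hidden element changes the region structure, and whether that redirection helps or hurts the hider is determined by the exact selection rule. Saying ``I expect to prove this by a careful case analysis on the relative sizes of the exclusive regions, together with a monotonicity/exchange argument'' is a description of the work, not the work.

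The paper's proof fills this gap with a concrete structured reduction rather than a greedy heuristic. After stripping $P_{123}$, it exchanges simultaneously from all three pairwise intersections until one of them (say $x_2\cap x_3$) is exhausted. At that point only two pairwise intersections survive, and the remaining sets are split as $x'_2=x_2\cap x_1$, $x''_2=x_2\setminus x_1$, $x'_3=x_3\cap x_1$, $x''_3=x_3\setminus x_1$. A three-way case split on the inequalities between $|x'_2|,|x''_3|$ and between $|x''_2|,|x'_3|$ then dictates explicit exchanges that reduce the instance either to the two-party problem or to three-party \emph{disjoint} set union. Because this construction uses each element in exactly one round and each round is balanced across its participants, hiding an element costs the hider at least as many received elements as it denies anyone else --- and that is the truthfulness argument. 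Your proposal never reaches this level of specificity, so truthfulness is asserted rather than established.

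One further correction: Theorem~\ref{thm:3set} does \emph{not} claim Pareto efficiency. The paper explicitly observes that Mechanism~1 is not Pareto efficient and obtains Pareto efficiency only in the subsequent theorem via an additional modification (giving the ``idle'' third player extra elements up to the common benefit level). Your final sentence conflates the two results.
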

\begin{proof}
The theorem will follow from the construction in Mechanism \ref{mech: 3psu}. See Figure 1 for a high level intuition of how the mechanism works.

Mechanism~\ref{mech: 3psu} starts by removing the common elements among all three parties, since these elements will not affect the exchange; these elements are denoted by the set $z_0$.
Then we consider the three pairwise intersections, from which the players can exchange a number of elements bounded by the smallest intersection i.e.~$s=\min\set{|x_1\intersect x_2|,|x_2\intersect x_3|,|x_3\intersect x_1|}$. Note that at the end of these exchanges at least one of these three intersections will be ``used up''. Therefore we assume w.l.o.g.~that after this step $x_2\intersect x_3=\emptyset$ and $\abs{x_2}\ge \abs{x_3}$.
Now we have reduced the original problem to a setting where there is no common intersection and only two pairwise intersections are non-empty, namely $x_1\intersect x_2$ and $x_1\intersect x_3$. 

Let $x_2, x_3$ be partitioned into $x_2=x'_2\union x''_2, x_3=x'_3\union x''_3$ where $x'_2=x_2\intersect x_1, x''_2=x_2\bs x_1$, and $x'_3=x_3\intersect x_1$, $x''_3=x_3\bs x_1$. 
The intuition will be that elements in $x'_2$ should be used to exchange elements in $x''_3=x_3\bs(x_1\union x_2)=x_3\intersect\overline{x_1}\intersect \overline{x_2}$, and similarly $x''_2$ for $x'_3$. 

Next we discuss how the exchanging looks like in different situations; again see Figure~\ref{fig: 3p} for a visual depiction.

\begin{itemize}
\item Case 1: $|x'_2| \ge |x''_3|$ and $|x''_2| \ge |x'_3|$. This is the simplest case, where we can simply make player $3$ exchange all elements in $x_3=x''_3\union x'_3$ with both players $1$ and $2$ for an equal amount of elements in $z\subseteq x'_2$ and $w\subseteq x''_2$ respectively.
Then, player $3$ used up all its elements and the problem reduces to the two-party case between
players $1$ and $2$ with remaining elements in $\left( x_1\bs x'_3, x_2\bs w \right)$.
\item Case 2: $|x''_3|>|x'_2|$ and $|x''_2|> |x'_3|$. Then player $2$ uses $\abs{x'_3}$ many elements in $x''_2$, denoted by $w$, to exchange all elements in $x'_3$ with players $1$ and $3$, and by symmetry player $3$ uses  $\abs{x'_2}$ many elements in $z$ to exchange $x'_2$ with
players $1$ and $2$.
After this exchange all the three players may have some elements left, but these are all exclusive elements, so the problem reduces to the easy case 
of three party with disjoint elements $\left( x_1\bs (x'_2\union x'_3), x''_2 \bs w, x''_3 \bs z \right)$. Then the mechanism exchanges a number of elements equal to
$\min\set{\abs{ x_1\bs (x'_2\union x'_3)}, \abs{x''_2 \bs w}, \abs{x''_3 \bs z}}$,
further reducing the problem to the two-party case.
\item Case 3: $|x''_3|<|x'_2|$ and $|x''_2|< |x'_3|$.
In this case player $2$ uses $x''_2$ in exchange for $\abs{x''_2}$ many elements in $z\subseteq x'_3$, and, by symmetry, player $3$ uses $x''_3$ to exchange $\abs{x''_3}$ many elements in $w\subseteq x'_2$.
After such an exchange the problem reduces to three parties with  $\left( x_1\bs (w \union z), x'_2 \bs w, x'_3 \bs z \right)$.
%
Finally, for a better social wellfare, we let player $2$ and player $3$ run a na\"{i}ve two-player exchange protocol with their remaining elements in $x'_2 \bs w$ and $x'_3 \bs z$.
This is not optimal for player $1$, who has already collect full information and wants to terminate the exchange immediately. 
However, player $1$ cannot prevent such exchange between players $2$ and $3$ anyhow.
\end{itemize}

Mechanism~\ref{mech: 3psu} guarantees individual rationality because every round of exchange in its process is ``fair'' and ``necessary''.
Every round is fair in the sense that all participants of that round get equal benefits --- each of them gives out some elements in exchange for more new elements. 
Every round of such exchange is necessary because each element appears in at most one round, i.e.~the mechanism does not reuse previously exchanged elements.
Therefore, a player that hides elements would suffer a loss lower bounded by the number of private elements that could have been traded, which is indeed a natural upper bound for the loss of others.
\end{proof}

\bigskip

\begin{algorithm}[H]\label{mech: 3psu}
	\caption{Three Party Set Union}
	
	\SetAlgoLined
	\KwIn{  Set $x_i\subseteq \mathcal{U}$ for each player $i$ }
	\KwOut{ Set $y_i \subseteq \mathcal{U}$ for each player $i$ }
	
	\medskip
	
	
	$z_0 = x_1\intersect x_2 \intersect x_3$
	
	\ForEach{player $i$} {$y_i = x_i$
		
		$x_i = x_i\bs z_0$} 
	
	\smallskip
	
	$s = \min\left\{\abs{x_1\intersect x_2}, \abs{x_2\intersect x_3}, \abs{x_3\intersect x_1}\right\}$  \tcc*[h]{\emph{W.l.o.g., $\abs{x_2} \ge \abs{x_3}$ and $s=\abs{x_2\intersect x_3}$}}
	
	$z_1 = x_2\intersect x_3$
	
	Select arbitrary sets $z_2\subseteq x_3\intersect x_1$ and $z_3\subseteq x_1\intersect x_2$ of sizes $\abs{z_2}=\abs{z_3}=s=\abs{z_1}$
	
	\smallskip
	
	\ForEach{player $i$} {$y_i = y_i\union z_i$
		
		$x_i = x_i\bs \left(z_1\union z_2 \union z_3\right)$}
	
	$x'_2 = x_2\intersect x_1$; $x''_2 = x_2\bs x_1$
	
	$x'_3 = x_3\intersect x_1$; $x''_3 = x_3\bs x_1$
	
	\smallskip
	
	$(y_1', y_2', y_3') = (\emptyset, \emptyset, \emptyset)$ \tcc*[h]{\emph{Sets that will store elements from recursive calls, if any.}}
	
	\medskip

	\uIf{$|x'_2| \ge |x''_3|$ and $|x''_2| \ge |x'_3|$}{\tcc*[h]{\emph{Case 1}}
		\smallskip
		
		Select arbitrary sets $z \subseteq x'_2$ and $w \subseteq x''_2$ of sizes $\abs{z}=\abs{x''_3}$ and $\abs{w}=\abs{x'_3}$
		
		$y_2 = y_2 \union x_3$
		
		$y_3 = y_3 \union z \union w$
		
		$y_1 = y_1 \union w \union x''_3$
		
		$(y_1', y_2') = \mbox{\textsc{TwoPartySetUnion}}\left( x_1\bs x'_3, x_2\bs w \right)$
			
}
\smallskip
	\uElseIf{$|x''_3|>|x'_2|$ and $|x''_2|> |x'_3|$}{\tcc*[h]{\emph{Case 2}}
		\smallskip

		Select arbitrary sets $w \subseteq x''_2$ and $z\subseteq x''_3$ of sizes $\abs{w}=\abs{x'_3}$ and $\abs{z}=\abs{x'_2}$
		
		$y_2 = y_2 \union x'_3 \union z$
		
		$y_3 = y_3 \union x'_2 \union w$
		
		$y_1 = y_1 \union z \union w$
		
		$(y_1', y_2', y_3') = \mbox{\textsc{ThreePartyDisjointSetUnion}}\left( x_1\bs (x'_2\union x'_3), x''_2 \bs w, x''_3 \bs z \right)$
	}
\smallskip
	\uElse{\tcc*[h]{\emph{Case 3: $|x''_3|<|x'_2|$ and $|x''_2|< |x'_3|$}}\\
		\smallskip
		Select arbitrary sets $w \subseteq x'_2$ and $z\subseteq x'_3$ of sizes $\abs{w}=\abs{x''_3}$ and $\abs{z}=\abs{x''_2}$
		
		$y_2 = y_2 \union x''_3 \union z$
		
		$y_3 = y_3 \union x''_2 \union w$
		
		$y_1 = y_1\union x''_2 \union x''_3$
		
		$(y_2', y_3') = \mbox{\textsc{TwoPartySetUnion}}\left(  x'_2 \bs w, x'_3 \bs z \right)$	
	}
\smallskip
	\ForEach{player $i$}{$y_i = y_i \cup y_i'$ \tcc*[h]{\emph{Add elements obtained from recursive calls, if any, to the final set for each player.}}}
			
	return $\left(y_1,y_2,y_3 \right)$
	
\end{algorithm}
\bigskip
\begin{figure}[htb]
	\caption{\large{\textcolor{black}{\textbf{\emph{High level idea for Mechanism 1.}}}}}
\bigskip
	\centering
	\includegraphics[scale=0.38]{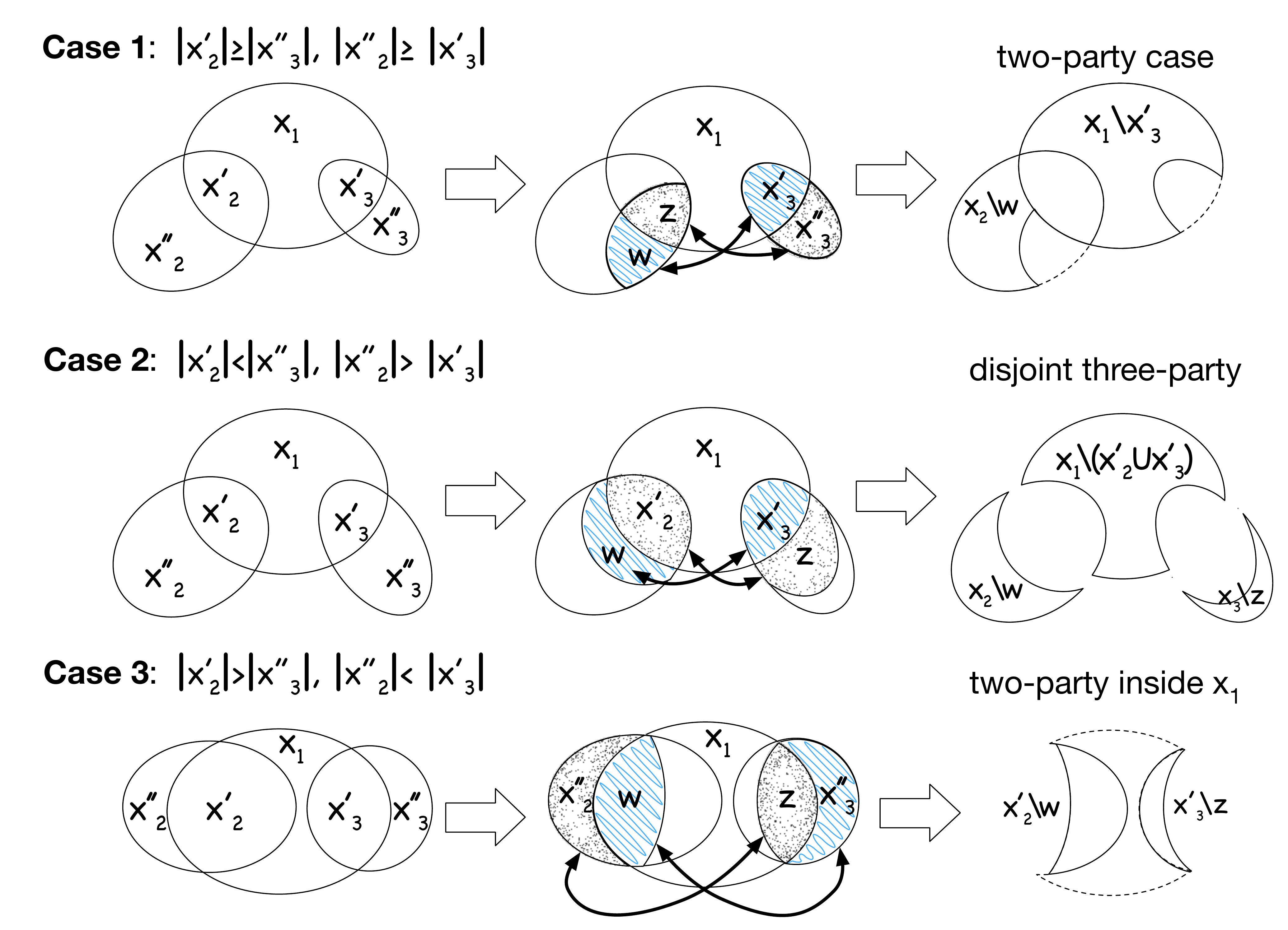}
	\label{fig: 3p}
\end{figure}

We note that Mechanism~\ref{mech: 3psu} is not Pareto efficient since the reduced problem (that is solved in the recursive call) is dealt with in a na\"{i}ve way.
For example, consider the last step in Case 1 in Figure~\ref{fig: 3p} i.e., after the problem has already been reduced to the two-party case. Here we could let players $1$ and $2$ exchange their remaining elements without player $3$.
This could be seen as fair, since player $3$ does not contribute new elements in those rounds.
However, this procedure does not achieve Pareto efficiency, for that we can improve the social welfare by giving player~$3$ some extra elements and, for sufficiently small number of elements,
the utilities of players $1$ and $2$ would not change and the solution would ensure that players $1$ and $2$ remain truthful as before.

\medskip

However, Mechanism~\ref{mech: 3psu} can be turned into a Pareto efficient mechanism as follows.

\begin{theorem}
	There is a truthful and Pareto efficient mechanism for set union among $n=3$ players.
\end{theorem}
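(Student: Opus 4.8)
I would keep every exchange performed by Mechanism~\ref{mech: 3psu} intact and only append a final ``fill-up'' phase that repairs the inefficiency noted above, where the agent left out of the last na\"ive recursion ends up strictly worse off than the others although handing it more elements hurts nobody. Concretely: run Mechanism~\ref{mech: 3psu} to get output sets $(y_1,y_2,y_3)$ with benefits $v_i=|y_i\setminus x_i|$, sort them as $v_{(1)}\ge v_{(2)}\ge v_{(3)}$, and let $c_j=|(x_1\cup x_2\cup x_3)\setminus x_j|$ be the maximum number of new elements agent $j$ could ever receive. If one agent $j$ attains the \emph{strict} minimum $v_{(3)}$, add to $y_j$ arbitrary hitherto-unknown elements of $(x_1\cup x_2\cup x_3)\setminus y_j$ until its benefit reaches $\min\{v_{(2)},c_j\}$; otherwise leave the output unchanged. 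Equivalently, repeatedly give a free element to the currently worst-off agent as long as this does not decrease anyone's utility. This phase only adds elements; since the minimum is raised to at most $v_{(2)}$, it changes neither the largest nor the second-largest benefit, so no $u_i$ decreases and the output still (weakly) Pareto-dominates that of Mechanism~\ref{mech: 3psu}.

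\textbf{Pareto efficiency.} Every assignment with $x_i\subseteq y_i\subseteq x_1\cup x_2\cup x_3$ is a feasible outcome, so the attainable benefit triples are exactly the box $\prod_i[0,c_i]$ and the utilities are $u_i=v_i-\max_{j\ne i}v_j$. A short case analysis over this box shows that a triple is Pareto-optimal if and only if, after sorting, either its two smallest coordinates are equal or the agent holding the smallest coordinate is at its own cap: if instead the smallest coordinate lies strictly below the second-smallest and strictly below that agent's cap, raising it strictly helps that agent and changes no one else's utility; conversely, in either listed configuration every feasible change that strictly helps some agent strictly hurts another. The fill-up phase drives $v_{(3)}$ to $\min\{v_{(2)},c_j\}$, landing in exactly one of these two configurations, so the final outcome is Pareto efficient.

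\textbf{Truthfulness, and the main obstacle.} Since Mechanism~\ref{mech: 3psu} is already truthful (Theorem~\ref{thm:3set}), it suffices to check that appending the fill-up phase opens no profitable deviation, and this verification is the crux of the proof. Fix an agent $i$ and the others' reports $x_{-i}$ and compare truth-telling with hiding some elements. If $i$ has a largest benefit under truth-telling, the fill-up leaves $u_i\ge0$ unchanged; under any deviation $i$ either keeps a largest benefit (again untouched by the fill-up, so truthfulness of Mechanism~\ref{mech: 3psu} forbids a gain) or drops out of that role, in which case after the fill-up $i$'s benefit is no larger than some other agent's and $u_i\le0$ --- either way no gain. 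The delicate case is when $i$ is not a largest-benefit agent: after the fill-up $u_i$ is essentially $v_{(2)}-v_{(1)}$ of the truthful profile (truncated below $c_i-v_{(1)}$ when $i$ is the unique worst agent), and one must rule out that hiding elements enlarges this; the danger is that the worst-off agent, once its own benefit has been ``floored'' by the fill-up, could profit by hiding elements that only it holds, thereby depressing the other two agents' benefits for free. Ruling this out is the main obstacle: it requires following the deviation round by round through all three cases of Mechanism~\ref{mech: 3psu} and using that the mechanism sees only the reported sets (so an element a deviator conceals is, at best, handed back to it during the fill-up and does not count towards its true benefit), in order to conclude that the ``fair and necessary rounds'' accounting behind Theorem~\ref{thm:3set} continues to bound the hider's loss from below by its loss to the others --- now applied to the difference of the two largest benefits rather than to a single benefit --- so that the gap $v_{(1)}-v_{(2)}$ never shrinks under a deviation and $i$ cannot profit.
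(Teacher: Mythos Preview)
Your approach coincides with the paper's: run Mechanism~\ref{mech: 3psu} and then top up the worst-off agent until it either ties the second-best or hits its cap $c_j$. Your Pareto-efficiency paragraph is in fact cleaner than the paper's --- you give an explicit characterization of the Pareto-optimal benefit triples over the box $\prod_i[0,c_i]$, whereas the paper argues case by case and simply asserts that case~3 is already efficient (true, because there player~1 ends at its cap).

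The gap is in truthfulness for the agent that receives the fill-up. You correctly locate the danger: once its benefit is floored at $v_{(2)}$, that agent might profit by hiding exclusive elements, shrinking the other two agents' benefits while its own post-fill-up floor stays put. You then say this ``requires following the deviation round by round through all three cases'' and appeal to the fair-and-necessary-rounds accounting from Theorem~\ref{thm:3set} --- but you never carry it out, and the accounting you cite only says that a hider's loss \emph{in its own pre-fill-up benefit} is at least the loss it inflicts on others; that is not the same statement as ``the gap $v_{(1)}-v_{(2)}$ cannot shrink,'' since the fill-up decouples the deviator's final benefit from its pre-fill-up benefit. The paper closes this hole with two concrete per-element observations about the filled agent (player~3 in case~1): each of its exclusive elements in $x_3''$ yields it exactly the same marginal benefit as it yields the other two players, and each of its shared elements in $x_3'$ affects nobody else's benefit but can only help player~3 by preventing a wasted fill-up element. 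These marginal equalities are what pin down truthfulness after the modification, and they are precisely the piece your sketch announces but does not supply.
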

\begin{proof}
For case $1$, consider the last step (of this case) in the execution of the mechanism (see Figure~\ref{fig: 3p}), after the problem has already been reduced to the two party case. 
Mechanism~\ref{mech: 3psu} can be modified to 
assign randomly selected extra elements to player $3$ so that $|y_3\bs x_3|=v_1 =v_2$ (recall that Mechanism~\ref{mech: 3psu} ensures $v_1=v_2$).
	This modification achieves Pareto efficiency since any further improvement on social wellfare will decrease the utility of player $1$ or player $2$, who already get all elements and cannot get more information benefit. 
	Now we prove that the above modification also preserves truthfulness.
	This is immediate for players $1$ and $2$ but requires the following observations to see that it continues to hold from the point of view of player $3$:
	\begin{itemize}
\item each of player $3$'s exclusive elements in $x_3''$ leads to the same amount of marginal benefit to player $3$ as to players $1$ and $2$, i.e. it is used to exchange for either one element in $x_2'$, which will not be exchanged between players $1$ and $2$, or two elements when
players $1$ and $2$ exchange elements in $x_2\bs w$ and $x_1\bs(x_2'\union x_3')$, respectively.
\item all of player $3$'s elements in $x_3'$ do not affect others' information benefits; however, such elements can help player $3$ since they might prevent the player from receiving some previously known element as the extra benefit. 
\end{itemize}

We note that the same modification also works to ensure Pareto efficiency in case $2$, while case $3$ already ensures a Pareto efficient exchange. Thus there exists a truthful mechanism that is Pareto efficient.
\end{proof}

\subsection{Any Number of Players}

We observe that the three player mechanism above relies on a complex analysis that depends on the different intersection sets. The number of intersections increases exponentially as $n$ grows and we leave open the question of whether it is possible to achieve an analogue of Mechanism~\ref{mech: 3psu} for more than three players.


\bigskip

\noindent \textbf{Open Problem 1.}
\emph{Is there a truthful polynomial time mechanism for set union for any number of players? Are there randomized such mechanisms?}

\medskip

In the following, we design a truthful mechanism for set union for any number of players in the special case where each player can either submit its whole set or the empty set, i.e. cooperate or not. We call this the ``all-or-nothing'' model and our main result in this section is the following theorem.

\begin{theorem} \label{thm:multiparty}
	There is a truthful, Pareto efficient, and welfare maximizing mechanism
	 for set union among any number $n$ of all-or-nothing players. The mechanism runs in polynomial time for any fixed $n$. 
\end{theorem}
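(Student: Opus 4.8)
The plan is to pin the mechanism down by its behaviour on every ``participation pattern'' and reduce the design problem to a single, very structured combinatorial optimization. Since the players are all-or-nothing, a strategy profile is just the subset $T\subseteq N$ of players who submit their true set, so it suffices to specify, for every $T$, the vector of information benefits $(v_i^{T})_{i\in T}$ awarded to the participants; a non-participant is given nothing new, which is the response that discourages dropping out the most, and one checks it does not hurt anyone else's incentives. The mechanism will be built by induction on $|T|$. First I would record two facts that collapse the feasibility and incentive conditions. (i) Because information is freely duplicable there is no contention among players for the elements of the pool: any participant $i\in T$ can independently be handed any subset of $\bigl(\bigcup_{j\in T}S_j\bigr)\setminus S_i$, so a benefit vector is realizable if and only if $0\le v_i^{T}\le M_i^{T}:=\bigl|\bigl(\bigcup_{j\in T}S_j\bigr)\setminus S_i\bigr|$. (ii) Because the only alternative to truth-telling for $i\in T$ is to drop out, i.e.\ to move to the profile $T\setminus\{i\}$, truthfulness of the whole mechanism is equivalent to the inequalities $\max_{k\in T\setminus\{i\}}v_k^{T}-v_i^{T}\le\max_{k\in T\setminus\{i\}}v_k^{T\setminus\{i\}}$ for all $T$ and all $i\in T$ (this single family accounts both for a member of $T$ wishing to leave and for an outsider wishing to join, since they compare the same two profiles). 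Writing $c_i^{T}$ for the right-hand side, the admissible benefit vectors at $T$ form exactly the polytope $P^{T}=\{v\ge 0:\ v_i\le M_i^{T}\ \forall i,\ v_k-v_i\le c_i^{T}\ \forall i\ne k\}$: a box intersected with difference constraints having nonnegative right-hand sides.

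The key lemma I would prove is that any polytope of the form $\{v\in\mathbb{R}^{S}:\ 0\le v_a\le M_a,\ v_k-v_a\le c_a\ \forall a\ne k\}$ with all $M_a,c_a\ge 0$ has a \emph{coordinate-wise maximum} element $\bar v$, given explicitly by $\bar v_a=\min\bigl(M_a,\ \min_{b\ne a}(M_b+c_b)\bigr)$; verifying that $\bar v\in P$ and that $\bar v$ dominates every point of $P$ is a one-line computation. This pays off three times at once: $\bar v$ is integral whenever the data are; $\bar v$ simultaneously maximizes the social welfare $\sum_a v_a$ and every individual $v_a$; and whenever $\bar v_a<M_a$ the defining minimum is attained at some $M_b+c_b$, a quantity that bounds every other coordinate too, so every coordinate of $\bar v$ is either at its cap $M_a$ or equal to $\max_a\bar v_a$ --- which one checks to be exactly the condition characterizing Pareto-efficient outcomes for the utility $u_i=v_i-\max_{j\ne i}v_j$. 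The mechanism is then defined by induction on $|T|$: on a singleton the benefit is $0$; on a larger $T$ put $c_i^{T}=\max_{k\in T\setminus\{i\}}\bar v^{T\setminus\{i\}}_k$ from the already-computed smaller instances and output $\bar v^{T}$. Truthfulness is immediate ($\bar v^{T}\in P^{T}$), Pareto-efficiency of the output on $T=N$ is the third payoff above, integrality is the first, and since there are only $2^{n}$ subsets and each needs $\mathrm{poly}(n,|\mathcal{U}|)$ work --- including actually choosing $\bar v^{N}_i$ new elements from the pool for player $i$, possible because $\bar v^{N}_i\le M_i^{N}$ --- the mechanism runs in polynomial time for any fixed $n$. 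Finally, welfare-maximality over \emph{all} truthful mechanisms follows from a parallel induction: if $\mathcal{M}'$ is truthful and $v'(T')$ are the benefits it gives the participants of $T'$, then $v'(T\setminus\{i\})\le\bar v^{T\setminus\{i\}}$ forces $\mathcal{M}'$'s competition parameters to be at most our $c_i^{T}$, hence $\mathcal{M}'$'s feasible polytope at $T$ is contained in $P^{T}$, hence $v'(T)\le\bar v^{T}$; at $T=N$ this gives $\sum_i v'_i(N)\le\sum_i\bar v^{N}_i$.

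The delicate point is not any one estimate but the self-reference in the recursion: the outside option of a deviator is itself the mechanism's output on a smaller instance, so the incentive constraints at $T$ genuinely depend on the allocations chosen at the sets $T\setminus\{i\}$, and the induction on $|T|$ is what makes this well founded. The subtlety one must rule out is whether a different truthful mechanism could gain welfare at $N$ by \emph{inflating} the competition parameters $c_i^{N}$, thereby loosening its own constraints, at the expense of the sub-instances; the coordinate-wise-maximum lemma is exactly what closes this gap, since it shows that our recursive choice already maximizes every $\max_{k}v_k^{T\setminus\{i\}}$ and hence makes $P^{T}$ at least as large as that of any truthful mechanism, while the \emph{same} extreme point is simultaneously welfare-maximal and (via the cap-or-top characterization) Pareto-efficient, so no trade-off between the two goals ever arises. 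I would close by checking the degenerate cases (identical inputs, a player contributing nothing, a set $T$ with small union) against the formula for $\bar v$, and by confirming that the two-player Proposition is recovered as the special case $\bar v^{\{a,b\}}_a=\bar v^{\{a,b\}}_b=\min\bigl(|S_a\setminus S_b|,\,|S_b\setminus S_a|\bigr)$.
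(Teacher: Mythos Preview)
Your proposal is correct and in fact produces the very same mechanism as the paper: your formula $\bar v_a=\min\bigl(M_a,\min_{b\ne a}(M_b+c_b)\bigr)$ simplifies to $\min\bigl(M_a,\min_b(M_b+c_b)\bigr)$, which coincides with the paper's $v_i=\min(V,M_i)$, $V=\min\bigl(\min_k(M_k+V_{-k}),\max_kM_k\bigr)$, once one notes that the clamp by $\max_kM_k$ is absorbed by the outer $\min$ with $M_i$ and that the paper's recursively computed $V_{-k}$ is exactly your $c_k^T$.

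The packaging, however, is organised differently. The paper proves truthfulness, Pareto efficiency, and welfare maximality by three separate pieces: a lemma that equal benefits imply truthfulness and Pareto optimality; a characterisation lemma saying a benefit vector is Pareto optimal iff every coordinate is either at its cap $M_j$ or equal to a common threshold; and a two-step welfare argument that first reduces any stable Pareto-optimal allocation to the one-parameter family $v_i=\min(V,M_i)$ and then shows the chosen $V$ cannot be raised without violating some player's participation constraint. Your single polytope lemma --- that a box intersected with nonnegative difference constraints has a coordinate-wise maximum, with the explicit closed form you give --- does all of this at once: feasibility and integrality are immediate, Pareto efficiency drops out of your ``cap-or-top'' observation (which is exactly the paper's characterisation lemma), and welfare optimality follows from the clean containment-of-polytopes induction on $|T|$ rather than a parameter-perturbation argument. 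What you gain is a uniform proof and a more robust welfare-maximality argument; what the paper's hands-on derivation offers is perhaps a more direct reading of the threshold $V$ as ``the largest common benefit no one would defect from''.
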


\medskip

To prove the theorem we develop several lemmas.


\begin{lemma}\label{observation1}
Let $M$ be any mechanism for set union for $n$ all-or-nothing players. If the mechanism ensures that in every execution the information benefit of each player is the same as that of any other player, then the mechanism is truthful and Pareto efficient.
\end{lemma}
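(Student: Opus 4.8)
The plan is to establish two things separately: (i) that equal information benefit for all players forces truthfulness, and (ii) that it forces Pareto efficiency. For truthfulness, I would argue that in the all-or-nothing model a player $i$ has only two strategies — submit $S_i$ or submit $\emptyset$. If $i$ participates, then by the equal-benefit hypothesis $u_i = v_i - \max_{j\neq i} v_j = v_i - v_i = 0$. If instead $i$ deviates to $\emptyset$, I need to show $u_i \le 0$. When $i$ drops out, the mechanism is run on the remaining inputs (with $x_i = \emptyset$); the other players still all get the same benefit as each other, call it $v'$, and player $i$ gets some benefit $v_i'' = |\mathcal{M}_i(\emptyset, x_{-i}) \setminus \emptyset|$. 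Since $i$ submitted nothing, any elements $\mathcal{M}$ could possibly hand back to $i$ are elements it learned from the other players, so $v_i'' \le v'$ (the other players collectively "source" at least as much as $i$ can receive). Hence $u_i = v_i'' - \max_{j \neq i} v_j' = v_i'' - v' \le 0$. Therefore truth-telling (utility $0$) is weakly dominant.

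**The subtle point** is the inequality $v_i'' \le v'$, i.e. that a non-participating player cannot learn more than each participating player does. I would justify this from the structure of the set-union problem: the output $\mathcal{M}_i$ given to any player can only contain elements that appear in some submitted set (the mechanism cannot manufacture elements), so $\mathcal{M}_i(\emptyset, x_{-i}) \subseteq \bigcup_{j \neq i} x_j$. The benefit $v_i''$ counts elements of this set that $i$ did not already have, while $v'$ counts what each participating player $j$ newly learns. One has to be slightly careful — it is conceivable $i$'s private set $S_i$ is huge, which would only make $v_i''$ smaller, so that direction is fine; the worry would be if $i$'s absence somehow let $i$ learn a lot while the others learn little, but since the others' common benefit $v'$ is achievable only using the pooled elements $\bigcup_{j\neq i} x_j$, and $i$'s received set is a subset of that same pool minus what $i$ knows, the bound $v_i'' \le v'$ should hold. **I expect this to be the main obstacle**: making the monotonicity argument ($v_i'' \le v'$) fully rigorous without appealing to the specific later mechanism, perhaps by invoking the earlier observation that the mechanism's outputs are drawn from the union of submitted inputs.

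**For Pareto efficiency**, suppose for contradiction that some outcome $\vec{y}' $ Pareto-dominates the mechanism's outcome $\vec{y}$ on some input profile: every player is weakly better off and some player $k$ is strictly better off, so $v_k' > v_k$ and $v_j' \ge v_j$ for all $j$. Since the mechanism gives everyone equal benefit $v := v_1 = \cdots = v_n$, we get $v_j' \ge v$ for all $j$ and $v_k' > v$. But in the set-union problem the maximum possible benefit any player can have is $|\bigcup_j x_j \setminus x_j|$, and more to the point, the equal-benefit outcome the mechanism produces is already the \emph{largest} common value achievable — if every player could simultaneously get more than $v$, the mechanism (which is welfare-maximizing among equal-benefit outcomes, or simply by its construction) would already have delivered it. So I would argue that $v$ is in fact the maximum value $t$ such that there is an allocation giving every player benefit at least $t$; any Pareto improvement would have to keep all $v_j' \ge v$ while pushing some $v_k' > v$, but then lowering $v_k'$ back toward $v$ and redistributing shows the improvement was illusory, or — cleaner — one observes directly that raising one player strictly above $v$ while keeping all others at $v$ is itself an equal-or-better common allocation contradicting maximality only if we also push the others up, which the structure forbids. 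The cleanest route: show the mechanism's benefit $v$ equals $\min_j |{\bigcup_{i} x_i} \setminus x_j|$-type bottleneck quantity that is simultaneously an upper bound on $\min_j v_j'$ for \emph{any} outcome, so no outcome can have all $v_j' \ge v$ with strict inequality somewhere. I would keep this part short and lean on the explicit mechanism developed in the subsequent lemmas for the exact value of $v$.
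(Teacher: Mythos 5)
Your truthfulness argument starts correctly (participation gives $u_i = v_i - \max_{j\ne i}v_j = 0$), but the way you try to close the deviation case is both unnecessary and unsound. The inequality $v_i'' \le v'$ does \emph{not} follow from the structure of set union for an arbitrary mechanism $M$: nothing in the problem prevents a mechanism from handing the dropout the entire pool $\bigcup_{j\ne i}x_j$ while giving the participants nothing new, in which case $v_i''$ is large and $v'=0$. What saves you is the lemma's hypothesis itself, which you never invoke at this point: the equal-benefit property is assumed to hold \emph{in every execution}, including the profile where $i$ submits $\emptyset$. Hence the dropout receives the common benefit of that execution and gets utility $0$ (or at most $0$, once you note that the dropout's true gain relative to $S_i$ is at most $|\mathcal{M}_i(\emptyset,x_{-i})\setminus\emptyset|$). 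That one-line appeal to the hypothesis is the paper's entire argument for this half; your ``main obstacle'' is a detour into a claim that is false in the generality required.

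The Pareto-efficiency half has a more serious problem. You argue about Pareto dominance in the information benefits $v_j$ and try to show the common value $v$ is the largest simultaneously achievable benefit, leaning on ``the explicit mechanism developed in the subsequent lemmas.'' But the lemma quantifies over \emph{any} mechanism with the equal-benefit property --- including the trivial one with $y_i=x_i$ and common value $0$, which the paper explicitly notes is covered ($V=0$) and is Pareto optimal even though every player's benefit could obviously be raised. So maximality of $v$ is neither available nor needed, and benefits are the wrong currency: Pareto efficiency here is with respect to the utilities $u_i = v_i - \max_{j\ne i}v_j$. The paper's argument is purely arithmetic: for \emph{any} outcome, $\sum_i u_i = \sum_i v_i - \sum_i \max_{j\ne i}v_j \le 0$, since $\sum_i \max_{j\ne i}v_j = (n-1)v_{(1)} + v_{(2)} \ge \sum_i v_i$. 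The equal-benefit outcome gives every player $u_i = 0$, so an outcome that weakly improves every utility and strictly improves one would have $\sum_i u_i > 0$, a contradiction. You should replace your entire efficiency paragraph with this observation.
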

\begin{proof}
	Consider the outcome of any execution of $M$ and suppose there is a value $V$ so that the information benefit of each player $i$ is $v_i = V$.
	An assignment with this property must satisfy truthfulness on this instance for all-or-nothing players since every player $i$ gets utility $u_i=v_i-\max_{j\ne i}v_j=0$ when participating and non-positive utility when not participating. 	
	
	By definition, $u_i=v_i-\max_{j \in N \setminus \{i\}} v_j$ for every player $i$. Then we have
	$\sum_{i \in N} u_i =\sum_{i \in N} v_i -\sum_{i \in N} \max_{j \in N \setminus \{ i\}} v_j\le 0$
	It follows that for any player $i$ with $u_i>0$ there must be another player $j$ with $u_j<0$, so no Pareto improvement is possible for this solution.
\end{proof}

In particular, for $V=0$, assigning $y_i=x_i$ to  every agent $A_i$ is always a stable and Pareto optimal solution, since there is no difference between participating or not.
However, this trivial solution is obviously the worst in social welfare.

To achieve maximum social welfare among all truthful and Pareto optimal solutions, we notice that by increasing the unified gain $V$ the social welfare grows respectively while preserving truthfulness and Pareto optimality.
However, $V$ cannot be arbitrarily large since there is a systematical upper bound for the possible gain of agents with large sets, i.e. $v_i =|y_i\bs a_i|\le \big| \left(\union_i x_i\right)\bs a_i \big|$ since $y_i\subseteq \union_i x_i$.

\medskip
The full characterization of Pareto optimal solutions for the Multiparty Set Union is summarized next.

\begin{lemma}\label{lemma:PO condition}
	Let $M$ be a mechanism for set union among $n$ players. Suppose that on some execution, the information benefit of each player $i$ is $v_i$.
	
	Let $i=\argmax_{j=1}^n v_j$ and $V=\max_{j \neq i} v_j$.
	Then the allocation is Pareto optimal if and only if for every $j\ne i$, player $j$ gets an information benefit of
	$v_j=\min\set{V, \big|\bigcup_k x_k \bs a_j\big|}.$
	That is, every player $j$, except player $i$ who gets the maximum, gets $V$ many new elements (unless there are fewer than $V$ that can be assigned to that agent).
\end{lemma}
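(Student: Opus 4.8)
The statement characterizes Pareto optimal allocations in the all-or-nothing (indeed, arbitrary) set union game: fixing the player $i$ achieving the maximum benefit and letting $V$ be the second-highest benefit, the allocation is Pareto optimal iff every other player $j$ gets exactly $\min\{V,\,|\bigcup_k x_k \setminus a_j|\}$. I would prove the two directions separately, reusing the welfare-accounting identity from Lemma~\ref{observation1}.

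For the "if" direction, suppose each $j \neq i$ has $v_j = \min\{V, |\bigcup_k x_k \setminus a_j|\}$. I would argue no Pareto improvement exists by case analysis on which player would have to gain. Player $i$ cannot gain at all: $v_i = V' := \max_j v_j \ge V$, but actually I need to be careful — $i$ could in principle still have room below $|\bigcup_k x_k \setminus a_i|$. The key point is that if $i$ gains, then $v_i$ strictly increases and $i$ is now the unique (or tied) max with a larger value, which does not decrease anyone else, so that alone is not blocked; the real constraint is utilities, not benefits. So I would instead work with the utility identity: as in Lemma~\ref{observation1}, $\sum_i u_i = \sum_i v_i - \sum_i \max_{j \ne i} v_j$. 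Under the stated allocation, each $j \ne i$ with $v_j = V$ (i.e. not capped) has $u_j = v_j - v_i \le 0$, and the capped players have $u_j = v_j - v_i < 0$ as well. Player $i$ has $u_i = v_i - V \ge 0$. A Pareto improvement would need to weakly increase every $u_j$ and strictly increase one. For a capped player $j$, $v_j$ is already at its absolute maximum $|\bigcup_k x_k \setminus a_j|$, so $u_j$ can only increase by decreasing $\max_{\ell \ne j} v_\ell = v_i$; but decreasing $v_i$ (when $v_i > V$) hurts $u_i$, and when $v_i = V$ decreasing it drops the max below $V$, forcing some uncapped $v_\ell$ down or making that $\ell$ the new max — I would trace this to show it cannot strictly help anyone without hurting someone. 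For an uncapped player $j$ (benefit $V$), raising $u_j$ means raising $v_j$ or lowering $v_i$; raising $v_j$ above $V$ makes $j$ (tied with or above) the max and does not help $j$'s own utility unless $v_i$ also drops — again a wash that hurts $i$. The cleanest framing: show directly that any reallocation strictly increasing some $u_j$ must strictly decrease $\min_\ell u_\ell$ or $u_i$, using that the only player with nonnegative utility is $i$.

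For the "only if" direction, suppose the allocation is Pareto optimal but some $j \ne i$ has $v_j < \min\{V, |\bigcup_k x_k \setminus a_j|\}$, i.e. $v_j < V$ and $v_j < |\bigcup_k x_k \setminus a_j|$. Then there is at least one element in $\bigcup_k x_k$ not yet known to $j$, so we can give $j$ one more new element; this raises $v_j$ by $1$ to at most $V$, so $j$ is still not the unique max and nobody else's $\max_{\ell \ne \cdot} v_\ell$ term changes (it was $v_i$ for everyone, and remains $v_i$ since $v_j \le V < v_i$ or $v_j \le V \le v_i$ — I'd need $v_i \ge V$, which holds by definition of $i$ as argmax and $V$ as the second max, with care about ties). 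Hence every other utility is unchanged and $u_j$ strictly increases: a Pareto improvement, contradiction. I would also need to handle the boundary where raising $v_j$ by one could tie $v_i$; even then $\max_{\ell \ne i} v_\ell$ becomes $v_j$ which is $\le$ old $V \le v_i$, so $u_i$ does not decrease, and for players $\ell \ne i,j$ the max-over-others is still $v_i$. That dispenses with the contradiction.

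The main obstacle I anticipate is the bookkeeping around ties at the maximum and the exact definition of $V$ when multiple players share the top benefit — the statement picks a single $\arg\max$ $i$ and defines $V$ as the max over the rest, so if two players tie for first, one of them plays the role of a "$j$" with $v_j = V = v_i$, and I must check the characterization is consistent (it forces $v_j = \min\{V, \ldots\} = V$, fine). A secondary subtlety is that the lemma is stated for general $n$-player set union, not just all-or-nothing, so I should not invoke all-or-nothing strategy restrictions; the argument above is purely about the feasible benefit vectors (constrained only by $v_j \le |\bigcup_k x_k \setminus a_j|$ and realizability of simultaneous allocations), and I would note that any benefit vector respecting these per-player caps is simultaneously realizable because information is freely duplicable — each $y_j$ can be chosen independently. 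I expect that observation (free duplication $\Rightarrow$ the feasible region is exactly the product of per-player caps) to be the one structural fact doing real work, and I would state it explicitly before the case analysis.
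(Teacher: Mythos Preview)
Your plan is correct and follows essentially the same two-step argument as the paper: for ``only if,'' hand one extra element to any $j\ne i$ with slack (so $v_j<V$ and $v_j<|\bigcup_k x_k\setminus a_j|$) and check nobody else's utility moves; for ``if,'' note every $j\ne i$ is either at its hard cap or sits at $V$, and in the latter case any increase hurts player $i$. Your explicit remarks on ties at the top and on realizability via free duplication are useful additions the paper leaves implicit.

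One small slip to fix in your ``if'' case analysis: you write that raising an uncapped $v_j$ above $V$ ``does not help $j$'s own utility.'' It does help $u_j$, since $\max_{\ell\ne j} v_\ell = v_i$ is unchanged; the obstruction is that $\max_{\ell\ne i} v_\ell$ jumps from $V$ to the new $v_j$, so $u_i = v_i - \max_{\ell\ne i} v_\ell$ strictly drops. That is exactly what the paper's terse phrase ``envied by another agent'' is pointing at, and once you correct this line your case analysis goes through.
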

\begin{proof}
	Suppose that on the given instance there is a player $j$ that gets a benefit bounded by $v_j<V$ and $v_j< \big|\union_k x_k \bs a_j\big|$.
	Then the mechanism can be modified on this instance to increase $v_j$ by one to achieve higher social welfare. 
	This is a Pareto improvement since no other players care about it: player $i$ only cares about the one who already receive $V$, while all the other players envy player $i$'s information gain.
	
	On the other hand, every such allocation achieves Pareto optimality since every agent's information gain is either impossible to improve or envied by another agent and hence cannot be improved without decreasing the utility of any other agent.
\end{proof}

In the rest of this section, we focus on determining the unified value $V$ that maximizes social welfare while preserving truthfulness and Pareto efficiency.

\bigskip

\begin{algorithm}[H]\label{mech: npsu}
	\caption{Multiparty Set Union}
	
	\SetAlgoLined
	\SetKwProg{Fn}{Function}{}{}
	\SetKwFunction{FRecurs}{ComputeV}%

	\KwIn{ $(x_1,x_2,\dots,x_n)$, where each set $x_i\subseteq \mathcal{U}$ is the input from player $i$.  }
	\KwOut{ $(y_1,y_2,\dots,y_n)$, where each set $y_i$ is sent to player $i$. }
	
	\medskip
	
	Fix an ordering $\pi$ of all elements in $\mathcal{U}$  \tcc*[h]{\emph{$\pi$ will be used to specify the exchanged elements}}

	

			
			
		



		
		
		

		
		
		

	\smallskip
	$u = \bigcup_{i=1}^n x_i$

	$V =$ \FRecurs{$x_1,\dots,x_n$} \tcc*[h]{\emph{the function \FRecurs is defined below}}
\smallskip

	\ForEach{player $i\in[n]$} {
		$v_i = \max\set{V, \left| u \bs x_i\right|}$
		
		Let $r_i$ be the set of first $v_i$ elements in $z_{-i}\bs x_i$ according to $\pi$
		
		
		$y_i =  x_i\union r_i$
	}

	\KwRet{$\left(y_1,y_2,\dots, y_n \right)$.}

	\bigskip

	\Fn{\FRecurs{$x_1,\dots,x_n$}}{
		\uIf{$n \le 1$}{\KwRet{$0$}}

		\ForEach{player $i\in[n]$} {

			$z_{-i} = \bigcup_{j\ne i} x_j$
			
			
			$V_{-i} =$ \FRecurs {$x_{-i}$}
		}
		$V = \min\set{\min_{k\in[n]}\set{|z_{-k}\bs x_k| + V_{-k}},\max_{k\in[n]}|z_{-k}\bs x_k|}$


	}{\KwRet{$V$}.}

\end{algorithm}
\bigskip

We can now prove Theorem \ref{thm:multiparty}, showing that Mechanism~\ref{mech: npsu} satisfies the desired properties.

\begin{proof}[Proof of Theorem \ref{thm:multiparty}]
	
	First we prove that Mechanism~\ref{mech: npsu} is truthful. For this, note that each player $k$ prefers to submit its set rather than the empty set since $v_{k}=\min\set{V, |z_{-k}\bs x_k|}\ge V-V_{-k}$, where $|z_{-k}\bs x_k|\ge V-V_{-k}$ follows the definition of $V$.

	Next we prove that Mechanism~\ref{mech: npsu} also achieves the maximal social welfare and Pareto optimality simultaneously.
	By Lemma~\ref{lemma:PO condition} the output of Mechanism~\ref{mech: npsu} is Pareto optimal for every fixed $V$.
	However, not all possible Pareto optimal solutions are covered by solutions in that form, i.e. it remains possible that exactly one agent gets more than $V$.
	In what follows we first eliminate the possibility that one player gets more than $V$, then we prove the $V$ specified in our mechanism is already optimal.
	
	Consider a Pareto optimal and stable solution (i.e. where the players cooperate by submitting their entire set) where each player $i$ gets benefit $v_i$, such that $v_1=V>V'$ and $v_i=\min\set{V',|z_{-i}\bs x_i|}$ for $i\ge 2$. 
	Let us change it into another solution where $v_i=\min\set{V,|z_{-i}\bs x_i|}$ for all agents.
	Obviously this new solution has (weakly) better social welfare and it remains incentive compatible for every player except player $1$, since their utilities either increase or remain as before. 
	Then we consider player $1$, which is the only worse off player under such a change.
	Indeed, player $1$ will still submit its whole set because the player cannot increase his utility by withdrawing his participation, when player $1$ suffers a loss of $V$ while the information benefit of every other player decreases by at most $V$ since $v_i\le V$.

	Finally, we argue that the value $V$ specified in Mechanism~\ref{mech: npsu} cannot be effectively increased while preserving truthfulness.
	Suppose that $V< \max_k|\union_i x_i\bs x_k|$ and the value of $V$ is increased to $V'>V$, i.e. this increment is meaningful for at least one player.
	Then by definition of $V$, there exists $k\in [n]$ such that
	$V'>V=|z_{-k}\bs x_k| + V_{-k}$, 
	and hence player $k$ prefers not to participate since $v_k\le |z_{-k}\bs x_k| <V'-V_{-k}$.
	Thus the value $V$ in Mechanism~\ref{mech: npsu} is truthful and the mechanism outputs a Pareto efficient solution with optimal social welfare as required.

To argue the runtime, note that the number of recursive calls only depends on the number of players $n$. The set operations inside each call run in polynomial time since they are simple set differences and unions.
\end{proof}

We note that in fact the mechanism guarantees that identical players are treated identically (symmetry)
and that players that contribute more also receive more from the mechanism (strong dominance).

\begin{corollary}
	There is a truthful, Pareto efficient, welfare maximizing, symmetric, and strongly dominant mechanism for set union among any number $n$ of all-or-nothing players.
\end{corollary}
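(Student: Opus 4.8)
The plan is to build on Theorem~\ref{thm:multiparty} and Mechanism~\ref{mech: npsu}, showing that the specific allocation rule already used there --- \emph{after} fixing the target value $V$ via \textsc{ComputeV} --- can be implemented in a way that is simultaneously symmetric and strongly dominant, without sacrificing any of the previously established properties (truthfulness, Pareto efficiency, welfare maximization). The key observation is that $V$ itself depends on the inputs only through the multiset of sets $\{x_1,\dots,x_n\}$ in a symmetric way: \textsc{ComputeV} takes a $\min$ over all players $k$ of $|z_{-k}\setminus x_k| + V_{-k}$ and a $\max$ over all $k$ of $|z_{-k}\setminus x_k|$, both of which are symmetric functions of the unordered tuple of inputs (this follows by induction on $n$, the base case $n\le 1$ being trivial). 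Hence two players with identical inputs receive the same cap $v_i = \min\{V, |u\setminus x_i|\}$, and the only remaining issue is that the step ``let $r_i$ be the first $v_i$ elements of $z_{-i}\setminus x_i$ according to $\pi$'' might hand them different sets.

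First I would fix a canonical ordering $\pi$ of $\mathcal{U}$ once and for all (as Mechanism~\ref{mech: npsu} already does) and argue symmetry directly: if $x_i = x_j$, then $z_{-i}\setminus x_i = (\bigcup_{k\ne i} x_k)\setminus x_i = (\bigcup_{k\ne j} x_k)\setminus x_j = z_{-j}\setminus x_j$ (because including or excluding a duplicate set from the union changes nothing once we subtract $x_i = x_j$), and $v_i = v_j$, so $r_i$ and $r_j$ are literally the same set --- the first $v_i$ elements of the \emph{same} set under the \emph{same} ordering. Thus $y_i\setminus x_i = y_j\setminus x_j$, giving equal information benefit and equal utility; this is exactly the symmetry requirement. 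Next I would handle strong dominance: the relevant partial order on inputs is set inclusion, and on outputs it is ``the set of newly learned elements is (weakly) smaller.'' If $x_i\subseteq x_j$, then $|u\setminus x_j|\le |u\setminus x_i|$, so $v_j = \min\{V,|u\setminus x_j|\}\le \min\{V,|u\setminus x_i|\} = v_i$; combined with the fact that $r_j\subseteq z_{-j}\setminus x_j$ and a well-chosen greedy rule, the player with the larger input learns a weakly smaller number of new elements, which is the statement. To make the ``inferior result'' claim hold set-theoretically (not just in cardinality) one can strengthen the selection rule so that the new elements $r_i$ are chosen in a nested, prefix-consistent way across players ordered by their $v_i$ values.

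The main obstacle I expect is precisely this last point: ensuring strong dominance as a genuine subset relation on the delivered results rather than merely a cardinality comparison, because the naive ``first $v_i$ elements of $z_{-i}\setminus x_i$'' rule picks from player-specific universes $z_{-i}\setminus x_i$ that need not be nested even when the inputs $x_i$ are. The fix is to select the reward elements greedily from a single global list: process elements of $\mathcal{U}$ in the order $\pi$, and for each player $i$ add the next element of $z_{-i}\setminus x_i$ to $r_i$ until $|r_i| = v_i$; when $x_i\subseteq x_j$ this guarantees that, modulo the elements of $x_j\setminus x_i$ which player $j$ already owns and hence would never be given, player $j$'s reward set is contained in player $i$'s reward set, delivering strong dominance. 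One then checks that this modified selection does not disturb the counting arguments in the proof of Theorem~\ref{thm:multiparty} --- each $v_i$ is unchanged, so truthfulness, Pareto optimality (via Lemma~\ref{lemma:PO condition}), and welfare optimality all carry over verbatim --- and that it runs in polynomial time for fixed $n$, since it is a single pass over $\mathcal{U}$ for each player. Assembling these observations yields the corollary.
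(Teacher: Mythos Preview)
Your overall strategy is sound and would yield a valid proof, but you have manufactured a difficulty that is not there, and as a result your argument is considerably more elaborate than the paper's two-line proof.

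The crux is your claimed ``main obstacle'': that the universes $z_{-i}\setminus x_i$ ``need not be nested even when the inputs $x_i$ are.'' This is false. Since $u=\bigcup_j x_j$, every element of $u$ that is not in $x_i$ must lie in some $x_j$ with $j\ne i$, so $z_{-i}\setminus x_i = u\setminus x_i$ for every player. Consequently, if $x_k\subseteq x_i$ then $u\setminus x_i\subseteq u\setminus x_k$: the universes \emph{are} nested, with the inferior player owning the larger one. Now the unmodified rule ``$r_i=$ first $v_i$ elements of $u\setminus x_i$ under $\pi$'' already gives $y_k\subseteq y_i$: any element of $r_k$ either lies in $x_i$ (hence in $y_i$) or lies in $u\setminus x_i$, and in the latter case its rank inside $u\setminus x_i$ is at most its rank inside $u\setminus x_k\le v_k$, while the number of such elements cannot exceed $|u\setminus x_i|$; in every regime of $V$ this forces that rank to be at most $v_i=\min\{V,|u\setminus x_i|\}$, so the element already belongs to $r_i$. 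Thus no ``global greedy'' modification is needed --- Mechanism~\ref{mech: npsu} as written is strongly dominant, which is exactly what the paper asserts in one sentence.

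On symmetry, the paper's argument is also a bit stronger than yours: it observes that $v_i=\min\{V,\,|u|-|x_i|\}$ depends on the input only through $|x_i|$, so any two players with equal-\emph{size} inputs (not just identical inputs) receive equal information benefit. Your argument for the special case $x_i=x_j$ is correct but does not quite match the paper's reading of ``equivalent quality.'' None of this affects truthfulness, Pareto efficiency, or welfare maximization, which you correctly note carry over from Theorem~\ref{thm:multiparty} because the $v_i$'s are unchanged.
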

\begin{proof}
	These properties are met by Mechanism~\ref{mech: npsu}.
	
	The symmetry property is trivial in this mechanism, since the information benefit of a player $i$ is 
	$$v_i=\max\set{V, |z_{-i}\bs x_i|}=\max\set{V, |\union_{j\in [n]} x_j| - |x_i|}$$ 
	which only depends on the size of its input i.e.~$|x_i|$.

	The strong dominance property is guaranteed since the reward $r_i$ to $A_i$ is exactly the first $v_i$ elements in $\union_{j\in [n]} x_j\bs x_i$ (according to the fixed ordering $\pi$), so that there must be $y_k\subseteq y_i$ for every player $k$ with $x_k\subseteq x_i$.
\end{proof}

\section{Beyond Union: Intersection and Average}

Moving beyond the multiparty set union problem, we suggest two other set problems where the agents own data points and wish to share them.
\bigskip

\noindent \textbf{One Dimensional Search}. The first problem is interval intersection, where each agent owns an interval in $\Re$ and the goal is to find a point in the intersection of all the sets. A high level scenario motivating this problem is that there is a group of people trying to find a gold mine situated at some location $t$, and each person $i$ has an estimate of where the gold mine is, given by a radius $d_i$, that is, the set $[t - d_i, t + d_i]$. The players would like to merge their estimates to get a better idea of where the gold mine is situated, but the challenge is that agents that have very good estimates (i.e. small sets) will not learn much from those with worse estimates (i.e. larger sets). Our main result for this problem is the following theorem, the details for which can be found in Appendix~\ref{sec:toy}.

\begin{theorem}
	There is a truthful polynomial time mechanism for interval intersection among any number $n$ of all-or-nothing players.
\end{theorem}

\bigskip

\noindent \textbf{Set Average}. The second problem is taking the average of a set that is distributed among the players.

\begin{theorem}
	There is a truthful polynomial time mechanism for the average point problem among any number $n$ of all-or-nothing players.
\end{theorem}

The mechanism is given in Appendix~\ref{sec: average point} and works by returning to each player that participated (i.e. submitted its point) the average of the set of points submitted, and nothing to the players that did not participate.

\medskip

In Appendix~\ref{app:general} we also discuss mechanisms for sharing problems where the value used to estimate the benefit of a player is more general.

\section{Discussion}

Aside from our concrete open questions, the directions of generalizing the results to richer strategy spaces, allowing randomization, and more general utility functions are also interesting. And at a higher level, the problem of understanding the interplay between having value for information and having value from exclusivity remains largely open. 

\paragraph*{Acknowledgments.} Research supported by: the Danish Independent Research Council under Grant-ID DFF-6108-00169 (FoCC); the European Union's Horizon 2020 research and innovation programme under grant agreement No 731583 (SODA).

\bibliographystyle{alpha}

\bibliography{itcsbib}

\newcommand{\etalchar}[1]{$^{#1}$}
\begin{thebibliography}{NNOB12}

\bibitem[ABL{\etalchar{+}}18]{DBLP:journals/iacr/ArcherBLKNPSW18}
David~W. Archer, Dan Bogdanov, Y.~Lindell, Liina Kamm, Kurt Nielsen,
  Jakob~Illeborg Pagter, Nigel~P. Smart, and Rebecca~N. Wright.
\newblock From keys to databases - real-world applications of secure
  multi-party computation.
\newblock {\em {IACR} Cryptology ePrint Archive}, 2018:450, 2018.

\bibitem[AGP16]{DBLP:conf/innovations/AzarGP16}
Pablo~Daniel Azar, Shafi Goldwasser, and Sunoo Park.
\newblock How to incentivize data-driven collaboration among competing parties.
\newblock In {\em Proceedings of the 2016 {ACM} Conference on Innovations in
  Theoretical Computer Science, Cambridge, MA, USA, January 14-16, 2016}, pages
  213--225, 2016.

\bibitem[Aum89]{aumann1989game}
Robert~J Aumann.
\newblock Game theory.
\newblock In {\em Game Theory}, pages 1--53. Springer, 1989.

\bibitem[BLO17]{DBLP:conf/asiacrypt/Ben-EfraimLO17}
Aner Ben{-}Efraim, Yehuda Lindell, and Eran Omri.
\newblock Efficient scalable constant-round {MPC} via garbled circuits.
\newblock In {\em Advances in Cryptology - {ASIACRYPT} 2017 - 23rd
  International Conference on the Theory and Applications of Cryptology and
  Information Security, Hong Kong, China, December 3-7, 2017, Proceedings, Part
  {II}}, pages 471--498, 2017.

\bibitem[CNW15]{DBLP:conf/aaai/ChenNW15}
Yiling Chen, Kobbi Nissim, and Bo~Waggoner.
\newblock Fair information sharing for treasure hunting.
\newblock In {\em Proceedings of the Twenty-Ninth {AAAI} Conference on
  Artificial Intelligence, January 25-30, 2015, Austin, Texas, {USA.}}, pages
  851--857, 2015.

\bibitem[CW16]{CW16}
Yiling Chen and Bo~Waggoner.
\newblock Informational substitutes.
\newblock In {\em {IEEE} 57th Annual Symposium on Foundations of Computer
  Science, {FOCS} 2016, 9-11 October 2016, Hyatt Regency, New Brunswick, New
  Jersey, {USA}}, pages 239--247, 2016.

\bibitem[DPSZ12]{DBLP:conf/crypto/DamgardPSZ12}
Ivan Damg{\aa}rd, Valerio Pastro, Nigel~P. Smart, and Sarah Zakarias.
\newblock Multiparty computation from somewhat homomorphic encryption.
\newblock In {\em Advances in Cryptology - {CRYPTO} 2012 - 32nd Annual
  Cryptology Conference, Santa Barbara, CA, USA, August 19-23, 2012.
  Proceedings}, pages 643--662, 2012.

\bibitem[HT04]{HT04}
Joseph Halpern and Vanessa Teague.
\newblock Rational secret sharing and multiparty computation.
\newblock In {\em Proceedings of the thirty-sixth annual ACM symposium on
  Theory of computing (STOC)}, pages 623--632. ACM, 2004.

\bibitem[IML05]{IML05}
Sergei Izmalkov, Silvio Micali, and Matt Lepinski.
\newblock Rational secure computation and ideal mechanism design.
\newblock In {\em 46th Annual IEEE Symposium on Foundations of Computer Science
  (FOCS'05)}, pages 585--594. IEEE, 2005.

\bibitem[KN08]{KN08}
Gillat Kol and Moni Naor.
\newblock Cryptography and game theory: Designing protocols for exchanging
  information.
\newblock In {\em Theory of Cryptography Conference (TCC)}, pages 320--339.
  Springer, 2008.

\bibitem[KOS16]{DBLP:conf/ccs/KellerOS16}
Marcel Keller, Emmanuela Orsini, and Peter Scholl.
\newblock {MASCOT:} faster malicious arithmetic secure computation with
  oblivious transfer.
\newblock In {\em Proceedings of the 2016 {ACM} {SIGSAC} Conference on Computer
  and Communications Security, Vienna, Austria, October 24-28, 2016}, pages
  830--842, 2016.

\bibitem[LPSY15]{DBLP:conf/crypto/LindellPSY15}
Yehuda Lindell, Benny Pinkas, Nigel~P. Smart, and Avishay Yanai.
\newblock Efficient constant round multi-party computation combining {BMR} and
  {SPDZ}.
\newblock In {\em Advances in Cryptology - {CRYPTO} 2015 - 35th Annual
  Cryptology Conference, Santa Barbara, CA, USA, August 16-20, 2015,
  Proceedings, Part {II}}, pages 319--338, 2015.

\bibitem[MNT09]{MNT09}
Peter~Bro Miltersen, Jesper~Buus Nielsen, and Nikos Triandopoulos.
\newblock Privacy-enhancing auctions using rational cryptography.
\newblock In {\em Proceedings of the 29th Annual International Cryptology
  Conference on Advances in Cryptology}, CRYPTO '09, pages 541--558, Berlin,
  Heidelberg, 2009. Springer-Verlag.

\bibitem[MPS03]{MPS03}
Robert McGrew, Ryan Porter, and Yoav Shoham.
\newblock Towards a general theory of non-cooperative computation.
\newblock In {\em Proceedings of the 9th conference on Theoretical aspects of
  rationality and knowledge}, pages 59--71. ACM, 2003.

\bibitem[MSV10]{MSV10}
Ravindra Mantena, Ramesh Sankaranarayanan, and Siva Viswanathan.
\newblock Platform-based information goods: The economics of exclusivity.
\newblock {\em Decis. Support Syst.}, 50(1):79--92, December 2010.

\bibitem[NNOB12]{DBLP:conf/crypto/NielsenNOB12}
Jesper~Buus Nielsen, Peter~Sebastian Nordholt, Claudio Orlandi, and
  Sai~Sheshank Burra.
\newblock A new approach to practical active-secure two-party computation.
\newblock In {\em Advances in Cryptology - {CRYPTO} 2012 - 32nd Annual
  Cryptology Conference, Santa Barbara, CA, USA, August 19-23, 2012.
  Proceedings}, pages 681--700, 2012.

\bibitem[NOS12]{NOS12}
Kobbi Nissim, Claudio Orlandi, and Rann Smorodinsky.
\newblock Privacy-aware mechanism design.
\newblock In {\em Proceedings of the 13th ACM Conference on Electronic
  Commerce}, pages 774--789. ACM, 2012.

\bibitem[PSZ18]{DBLP:journals/tissec/PinkasSZ18}
Benny Pinkas, Thomas Schneider, and Michael Zohner.
\newblock Scalable private set intersection based on {OT} extension.
\newblock {\em {ACM} Trans. Priv. Secur.}, 21(2):7:1--7:35, 2018.

\bibitem[Rot88]{roth1988shapley}
Alvin~E Roth.
\newblock {\em The Shapley value: essays in honor of Lloyd S. Shapley}.
\newblock Cambridge University Press, 1988.

\bibitem[Sha88]{shapley}
Lloyd~S Shapley.
\newblock A value for n-person games.
\newblock {\em The Shapley value}, pages 31--40, 1988.

\bibitem[ST05]{ST05}
Yoav Shoham and Moshe Tennenholtz.
\newblock Non-cooperative computation: Boolean functions with correctness and
  exclusivity.
\newblock {\em Theoretical Computer Science}, 343(1):97--113, 2005.

\bibitem[SW00]{SW00}
Ilya Segal and Michael Whinston.
\newblock Exclusive contracts and protection of investments.
\newblock {\em RAND Journal of Economics}, 31(4):603--633, 2000.

\end{thebibliography}

\appendix

\newpage

\section{One Dimensional Search} 
\label{sec:toy}

In this section we study a problem that is also related to sharing information captured through sets, except the goal is to find a point in the intersection of all the sets. A high level example of such a problem is the following.  

\bigskip

\noindent \textbf{Gold Mine Example}. Suppose there is a group of people and everyone is interested in finding a gold mine. The gold mine is situated in location $t$.
Everyone has some estimate of where the gold mine is $t_i$ and some uncertainty given by a radius $d_i$, i.e. each player $i$ has an interval
$[t_i - d_i, t_i + d_i]$. The players want to join their information to get a better approximation of the location of the gold mine and know that the gold mine lies in the intersection of all the estimates (sets).
However if a player $i$ knows that its radius $d_i$ is much smaller than that of another player $j$, then player $i$ knows that it won't learn much by interacting with player $j$. That is, in the worse case player $i$'s interval is contained in player $j$'s interval, so there is no information player $i$ can infer from $j$. Since player $i$ would rather not have player $j$ gain free information without receiving anything in return, the problem is to design a mechanism that incentivizes the players to learn from each other (as much as possible). 

\medskip

More formally, suppose each player $i$ has an interval in $[\alpha_i, \beta_i] \subset \Re$ and the goal is to find a point contained in all the intervals. We are promised that such a point exists.

\medskip

We solve this problem for all-or-nothing players, where a player can either cooperate by submitting its interval $[\alpha_i, \beta_i]$ or not cooperate by submitting the whole set of real numbers $\Re$. Given that the intersection point chosen is $t$, the information benefit that a player derives from learning an interval $[a, b]$ will be given by an arbitrary monotone function $v$ such that $v(a,b) = v(b - a)$ as long as $a \leq t \leq b$ and $v(a,b) = -\infty$ otherwise. Given an allocation where the benefit to each player $i$ is $v_i$, the utility of a player $i$ is $u_i = v_i - \max_{j \neq i} v_j$. 


\medskip

\begin{theorem}
	There is a truthful polynomial time mechanism for interval intersection among any number $n$ of all-or-nothing players.
\end{theorem}
\begin{proof}
Consider the following mechanism. 

\medskip

\begin{algorithm}[H]\label{mech:1}
	\caption{One Dimensional Search Mechanism}
	
	\SetAlgoLined
	\KwIn{$(x_1, \ldots, x_n)$, where $x_i$ is the set submitted by each player $i$.  }
	\KwOut{$(y_1, \ldots, y_n)$, where $y_i$ is the set received by each player $i$. }
	
	\medskip
	
	\ForEach{player $i$} {$y_i = x_i$}
	
	$j = 1$; $k = 1$
	
	\ForEach{player $i$}{
		\If{$\alpha_j<\alpha_i$ \emph{\textbf{or}} $\left(\alpha_j = \alpha_i\right.$ \emph{\textbf{and}} $\left.\beta_j> \beta_i\right)$}{
			$j = i$
		}
		\If{$\beta_k > \beta_i$ \emph{\textbf{or}} $\left(\beta_k = \beta_i\right.$ \emph{\textbf{and}} $\left.\alpha_k<\alpha_i\right)$ }{
			$k = i$
		}	
	}     
	\If{$j\ne k$}{
		Select points $\beta_j',\alpha_k'$ so that $v(y_j)-v(x_j)=v(y_k)-v(x_k)>0$
		
		$y_j = [\alpha_j,\beta_j']$; $y_k = [\alpha'_k,\beta_k]$
		
	}
	
	\KwOut{ $\left(y_1,\dots, y_n \right)$}
	
\end{algorithm}
The high level intuition is that the two players contributing to the most accurate interval are equally rewarded, i.e. these two players  exchange their information fairly (equal benefit).
Other players get no update on what they submitted.

For players that have intervals all different from each other, note that 
in Mechanism~\ref{mech:1}, the interval $[\alpha_j,\beta_k]$ tracks the intersection of all processed $x_i$'s
and eventually $[\alpha_j,\beta_k]=\Intersect_{i=1}^{n} [\alpha_i,\beta_i]$.
Then it is easy to verify that such a mechanism is truthful for the utility function we study.
To handle players that may have identical sets as input, note that Mechanism~\ref{mech:1} can be made to handle this case by assigning identical results to agents reporting identical types.
\end{proof}

\newpage

\section{Average Point Problem}
\label{sec: average point}

In this section we consider another well-known multiparty problem---the Average Point problem---where every agent is assigned a private input and they want to compute the average value of their private inputs. 
This problem is very different from set union, in the sense that the quality of private information cannot be objectively measured, since it depends on the private inputs of other agents.
As a result, we resort to a mechanism where in some sense the principal treats the agents more equally.
\medskip

Let the universe $\mathcal{U}$ be a metric space (e.g. $\mathcal{U} = \Re$ or $\mathcal{U}=\Re^2$). There are $N = \{1, \ldots, n\}$ players, so that each player $i$ has a private point $a_i \in \mathcal{U}$. 
The goal is to compute the average point $\overline{a}= \sum_{i=1}^n {a_i}/n$. 
We focus on all-or-nothing players, who either submit their point $a_i$ or nothing, the latter of which is denoted by $\bot$. 

\medskip

The value of each point $y \in \mathcal{U}$ is given by the square loss function: $v(y)= -\left| y- \overline{a} \right|^2$; for completeness, we define $v(\bot)=-\infty$.
\footnote{More generally, we could define $v(y)= -d(y,\overline{a})^t$ as the $t$-th moment of the metric distance between $y$ and $\overline{a}$, where $d(\cdot)$ denotes the metric equipped by $U$.},\footnote{The value function $v(\cdot)$ depends on $\overline{a}$ computed from true types $(a_1,\dots, a_n)$ rather than the reported  $(x_1,\dots, x_n)$. This is meaningful, for instance, in applications in which the principal has already collected the types of the agents, and the only choice left to the agents is whether to allow their type to be used in the computation or not.}
The information benefit of a player $i$ on receiving the value $y_i$ from a mechanism $v_i = v(y_i)-v(a_i)$ if $x_i=a_i$ and $v_i=0$ otherwise.
The utility of player $i$ is the same as before: $u_i = v_i - \max_{j \neq i} v_j$.

%

\medskip

Our main result in this section is a mechanism for this problem.

\begin{theorem}
	There is a truthful polynomial time mechanism for the average point problem among any number $n$ of all-or-nothing players.
\end{theorem}
\begin{proof}
Let $(x_1,\dots, x_n)$ be the reported inputs, we now design the mechanism by specifying $y_i$'s that $T$ assigns to the agents.
Since all participants appear in equal positions, 
the mechanism is defined as follows:
\begin{align}
\label{eq: avg point}
y_i=f_i(x_1,x_2,\dots,x_n)=
\begin{cases}
\overline{x} = \frac{ \sum_{x_i\ne \bot} x_i }{\text{$\#$ of $x_i$'s such that $x_i\ne \bot$}} & \text{if } x_i\ne \bot\\
\bot		  & \text{if } x_i= \bot
\end{cases}
\end{align}
That is, the mechanism computes the average point $\overline{x}$ of all reported points,
 and sends $\overline{x}$ to all participants (and $\bot$ to the nonparticipants). 

The above mechanism $f$ trivially satisfies the properties of ``maximal social welfare with Pareto efficiency", ``symmetry" and ``null agent gets zero", since all participants get identically the optimal result that the mechanism could offer. 
\footnote{Note the strong dominance property does not make much sense for the average point problem, 
because in each execution there are merely two possible outputs by (\ref{eq: avg point}) and the only inferior relation in results reflects participation/non-participation.}

We now have to argue that $f$ also guarantees the players submit their data points.

Consider the case where a subset $S\subseteq N$ of agents chooses to participate and others do not.
Then, for every player $i \in S$, we have $y_i=\overline{x}=\sum_{A_i\in S} x_i /  \left|S \right|$.
If an agent, say $1 \in S$, deviates by switching from $x_1=a_1$ to $\widehat{x}_1=\bot$, then their result changes from $y_1=\overline{x}$ to $\widehat{y}_1=\bot$ and thus their information benefit changes from $v_1=v(\overline{x})-v(a_1)$
into $\widehat{v}_1=0$.
That is,
$$\Delta v_1= \widehat{v}_1-v_1= v(a_1)-v(\overline{x})$$

For every other agent $ i \in S\bs\set{1}$ the result changes from $y_i=\overline{x}$ to
$$\widehat{y}_i=\overline{x}'=\sum_{i\in S\bs\set{1}} x_i /  \left(\left|S \right|-1\right),$$
and so $\Delta v_i = \widehat{v}_i-v_i=v(\overline{x}')-v(\overline{x})$ for every $i=2,3,\dots, n$.
It suffices to show that $\Delta v_1\le \Delta v_i$ for all $i>1$. 

When all agents participate in the computation and $S= N $,
we have $\overline{x}=\overline{a}$ and hence $v(\overline{x})=0$, $\Delta v_i=v(\overline{x}')\le 0$.
Then, by deviating from cooperation player $1$ changes the information benefit of others from $y_i=\overline{x}$ to $\widehat{y}_i=\overline{x}'= \left(n\overline{a}-a_1\right)/(n-1)$.
The value of $\widehat{y}_i$ is as follows:
$$v(\widehat{y}_i)= v(\overline{x}')=- \left|\overline{x}'-\overline{a}\right|^2=-\left| \frac{\overline{a}-a_1}{n-1}  \right|^2\ge 
\frac{-\left|  a_1-\overline{a} \right|^2}{n-1}=\frac{1}{n-1} v(a_1)$$

Therefore, recalling that $v(\overline{x})=0$, $\Delta v_i=v(\overline{x}')-v(\overline{x})=v(\overline{x}')\le 0$  for $i=2,3,\dots, n$,
\begin{align*}
\Delta v_1 = & v(a_1)-v(\overline{x})
\le (n-1) v(\overline{x}') -v(\overline{x}) 
= (n-1)\left( v(\overline{x}') -v(\overline{x})  \right)
= (n-1)\Delta v_i 
=  \sum_{i=2}^n \Delta v_i
\end{align*}

Thus the players will act truthfully 
since the utility of every agent will decrease when deviating from telling the truth.

We further remark that the above argument holds not only for the square loss function $v(y)=-\left|y-\overline{a}\right|^2$, but also naturally extends to  $v(y) = -\left|y -\overline{a}\right|^p$ when $p>0$.
\end{proof}

\newpage

\section{General Sharing Problems} \label{app:general}

In this section, we investigate the more general case when the value function $v$ satisfies only minimal assumptions. 
More specifically, we introduce a subgroup value function $V:\zo^{|N|}\to \Re$ such that
for every subset $S\subseteq N$, the collection value $V(S)$ 
is defined as the value function $v(\cdot)$ evaluated on the optimal combination of all private types of agents in $S$.
Moreover, the function $V$ is monotone in the sense that for every $S'\subseteq S\subseteq N $, $V(S')\le V(S)$.

The utility of an agent is defined as before, i.e. the difference between the information gain of $i$ and the maximum gain of any other player.
In the following, we will say that a set of players cooperate if they submit their true input.

\begin{definition}
Let $S$ be the set of cooperative agents and $V$ be the subgroup value function.
The \emph{rewardable contribution} of $i\in S$ in the coalition $S$ is denoted by $\phi_i(S)$ such that
\begin{align}
\phi_i(S)
=\max_{T\subseteq S, i\in T}\Big\{
								\min\big\{ V(T)-V(\{i\}), V(T)-V(T\backslash \{i\})  
									\big\}  
							\Big\}
\label{def:phi_S}
\end{align}
\end{definition}
Intuitively, Mechanism~\ref{mech: general} rewards every agent with respect to his contribution within the feasible amount.

\bigskip

\begin{algorithm}[H]\label{mech: general}
\caption{General Mechanism}

	\SetAlgoLined
	\KwIn{  $x_i$ for each player $i$.  }
	\KwOut{ $y=\left(y_1,\dots, y_n \right)$ with $y_i$ assigned to player $i$. }
	
	\medskip
	
	$S = \set{i\st x_i\ne \bot}$
	
	\ForEach{$i\notin S$} {$y_i = \bot$}
	
	\ForEach{$i\in S$}{
		find $y_i\in\result$ such that $v(y_i)=\phi_i(S)$ 
		
		\emph{[in case such $y_i$ does not exist, select $y_i$ such that $\exval{v(y_i)}=\phi_i(S)$]}
		
	}
%
%
%

	Output $\left(y_1,\dots, y_n \right)$.

\end{algorithm}
\bigskip

\begin{theorem}\label{thm:general}
Mechanism~\ref{mech: general} is truthful for any number $n$ of all-or-nothing players.
It is symmetric and satisfies strong dominance given a universal traversal function over the type set of the players.
\end{theorem}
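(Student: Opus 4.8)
\textbf{Proof proposal for Theorem~\ref{thm:general}.}

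The plan is to verify each claimed property in turn, relying on the structure of the rewardable contribution $\phi_i(S)$ defined in~(\ref{def:phi_S}). First I would establish \emph{truthfulness} for all-or-nothing players. Fix a player $i$ and an arbitrary strategy profile of the others; let $S$ be the set of cooperating players when $i$ also cooperates, and $S\setminus\{i\}$ the set when $i$ abstains. When $i$ cooperates, the mechanism gives $i$ value $v(y_i)=\phi_i(S)$, so $i$'s information benefit is $v_i=\phi_i(S)-V(\{i\})$; when $i$ abstains, $v_i=0$. The key inequality to extract from the definition of $\phi_i$ is that, taking $T=S$ inside the max, we get $\phi_i(S)\ge \min\{V(S)-V(\{i\}),\,V(S)-V(S\setminus\{i\})\}$, hence $v_i=\phi_i(S)-V(\{i\})\ge \min\{\,0,\ (V(S)-V(S\setminus\{i\}))-(V(\{i\})-V(\{i\}))\,\}$... more carefully: $v_i\ge\min\{0,\ V(S)-V(S\setminus\{i\})-\big(V(\{i\})-V(\{i\})\big)\}$. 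I need to argue $v_i$ is at least the maximal drop $i$'s deviation causes to any \emph{other} player's benefit. The point is that each $j\ne i$ in $S\setminus\{i\}$ has $v_j=\phi_j(S)-V(\{j\})$ when $i$ cooperates and $v_j=\phi_j(S\setminus\{i\})-V(\{j\})$ otherwise, and since $\phi_j$ is a max over coalitions $T$, every coalition available in $S\setminus\{i\}$ is also available in $S$, so $\phi_j(S)\ge\phi_j(S\setminus\{i\})$, i.e. $i$'s participation only helps others; moreover the amount it helps $j$ is controlled by $V(T)-V(T\setminus\{i\})$-type terms that the $\min$ in~(\ref{def:phi_S}) bounds against $i$'s own rewardable contribution. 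Assembling these, $u_i=v_i-\max_{j\ne i}v_j$ is non-negative when $i$ cooperates and $i$ has no profitable deviation. I would write this out as: cooperating is weakly dominant because $\phi_i(S)-V(\{i\})$ dominates $\max_{j\ne i}\big(\phi_j(S)-\phi_j(S\setminus\{i\})\big)$, which follows term-by-term from the $\min\{V(T)-V(\{i\}),V(T)-V(T\setminus\{i\})\}$ construction.

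Second, \emph{symmetry}: if two cooperating players $i,i'$ are interchangeable in the sense that $V$ is invariant under swapping them (equivalently they have equivalent inputs), then for every $T\ni i$ there is a corresponding $T'\ni i'$ of equal value obtained by the swap, so $\phi_i(S)=\phi_{i'}(S)$, hence $v(y_i)=v(y_{i'})$; and the ``universal traversal function over the type set'' is precisely the device that lets the mechanism pick the \emph{same} $y_i=y_{i'}$ (not merely equal-value ones) when the players are truly identical, exactly as in the set-union case (the fixed ordering $\pi$ in Mechanism~\ref{mech: npsu}). Third, \emph{strong dominance}: if player $i$'s input is inferior to player $j$'s under the relevant partial order, then $V(T)$ for coalitions containing $i$ is dominated by the analogous coalitions containing $j$, so $\phi_i(S)\le\phi_j(S)$, and — again invoking the universal traversal — the outputs can be chosen nested, $y_i$ inferior to $y_j$ under the order. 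I would state these two paragraphs compactly since they are direct consequences of monotonicity and the symmetry of~(\ref{def:phi_S}) under relabeling.

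The main obstacle I anticipate is the truthfulness argument, specifically pinning down that $\phi_i(S)-V(\{i\})\ge\max_{j\ne i}\big(\phi_j(S)-\phi_j(S\setminus\{i\})\big)$. The subtlety is that $\phi_j(S)$ and $\phi_j(S\setminus\{i\})$ are each maxima over \emph{different} coalition families, so the difference is not simply one $\min$-expression; I would handle it by fixing the optimal coalition $T^*$ achieving $\phi_j(S)$, splitting on whether $i\in T^*$ (if not, then $T^*\subseteq S\setminus\{i\}$ and the difference is $\le 0$, done), and if $i\in T^*$, bounding $\phi_j(S)-\phi_j(S\setminus\{i\})\le V(T^*)-V(T^*\setminus\{i\})$ using that $T^*\setminus\{i\}$ is a feasible coalition for $\phi_j(S\setminus\{i\})$. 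Then $V(T^*)-V(T^*\setminus\{i\})$ must be compared against $\phi_i(S)-V(\{i\})\ge\min\{V(T^*)-V(\{i\}),\,V(T^*)-V(T^*\setminus\{i\})\}-V(\{i\})$, which (carefully expanding the two cases of the $\min$) is at least $V(T^*)-V(T^*\setminus\{i\})-0$ or yields a non-negative slack — this case analysis is where I expect to spend the real effort, and it also forces the two alternatives in~(\ref{def:phi_S}) (the $V(T)-V(\{i\})$ term ensures $i$'s own benefit stays non-negative, while the $V(T)-V(T\setminus\{i\})$ term caps the externality $i$ imposes on others). The remaining properties then follow routinely from monotonicity and the relabeling symmetry of the $\phi$ definition together with the universal traversal assumption.
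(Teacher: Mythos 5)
Your overall architecture matches the paper's: reduce truthfulness to the per\nobreakdash-opponent inequality $\phi_i(S)\ge\phi_j(S)-\phi_j(S\setminus\{i\})$ (the paper's inequality (\ref{ineq:Ai})), fix the coalition $T^*$ attaining $\phi_j(S)$, dispose of the case $i\notin T^*$, and use $T^*\setminus\{i\}$ as a witness for $\phi_j(S\setminus\{i\})$ and $T^*$ itself as a witness for $\phi_i(S)$. The symmetry and strong\nobreakdash-dominance paragraphs are fine and essentially what the paper says. However, the truthfulness argument as sketched has two concrete problems. First, you take player $i$'s benefit to be $\phi_i(S)-V(\{i\})$, whereas the mechanism sets $v(y_i)=\phi_i(S)$ and the paper's utility computation uses $\phi_i(S)$ directly (the term $V(T)-V(\{i\})$ inside the $\min$ of (\ref{def:phi_S}) is already what nets out $i$'s prior knowledge). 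With your extra $-V(\{i\})$ the target inequality becomes $\phi_i(S)-V(\{i\})\ge\phi_j(S)-\phi_j(S\setminus\{i\})$, which is strictly stronger and fails in general unless $V(\{i\})\le 0$; your own attempted verification (``is at least $V(T^*)-V(T^*\setminus\{i\})-0$'') silently needs that term to vanish.

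Second, and independently, the bound $\phi_j(S)-\phi_j(S\setminus\{i\})\le V(T^*)-V(T^*\setminus\{i\})$ is correct but too lossy to finish. Writing $\min\{\alpha-\beta,\alpha-\gamma\}=\alpha-\max\{\beta,\gamma\}$, the exact difference is $V(T^*)-V(T^*\setminus\{i\})$ minus the nonnegative slack $\max\{V(\{j\}),V(T^*\setminus\{j\})\}-\max\{V(\{j\}),V(T^*\setminus\{i,j\})\}$, and that slack is precisely what rescues the case $V(\{i\})>V(T^*\setminus\{i\})$: there the witness $T^*$ only gives $\phi_i(S)\ge V(T^*)-V(\{i\})$, which can be far smaller than $V(T^*)-V(T^*\setminus\{i\})$ (e.g.\ set union with $x_i$ huge and $x_j$ a single element already contained in $x_i$ makes $\phi_i(S)=0$ while $V(T^*)-V(T^*\setminus\{i\})$ is large), so the intermediate inequality you would be left to prove is false even though (\ref{ineq:Ai}) itself is true. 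The fix is the paper's: keep the full expression, reduce to the paper's inequality (\ref{ineq:last}), namely $\max\{V(\{j\}),V(T^*\setminus\{j\})\}+V(T^*\setminus\{i\})\ge\max\{V(\{i\}),V(T^*\setminus\{i\})\}+\max\{V(\{j\}),V(T^*\setminus\{i,j\})\}$, and verify it by a two\nobreakdash-case analysis on whether $V(\{j\})\ge V(T^*\setminus\{j\})$, using only monotonicity of $V$.
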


\begin{corollary}
	Mechanism~\ref{mech: general} can be modified to a Pareto optimal mechanism  for $v = \max_i \phi_i(S)$ by Lemma~\ref{lemma:PO condition}.
\end{corollary}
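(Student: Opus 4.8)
The plan is to replay, in the general-value setting, the step that upgraded the truthful Mechanism~\ref{mech: npsu} to a Pareto-optimal and welfare-maximizing one in the proof of Theorem~\ref{thm:multiparty}. Let $S$ be the set of cooperating players and set the unified value $V^\star := \max_{i\in S}\phi_i(S)$. For each $i\in S$ write $B_i := V(S)-V(\{i\})$ for the largest benefit $i$ could feasibly be awarded; taking $T=S$ in the definition of $\phi_i(S)$ shows $\phi_i(S)\le B_i$, and since also $\phi_i(S)\le V^\star$ we get $\min\{V^\star,B_i\}\ge\phi_i(S)$. The modified mechanism keeps $y_i=\bot$ for $i\notin S$, but for each $i\in S$ it now selects a result $y_i$ with $v(y_i)=\min\{V^\star,B_i\}$, falling back to the randomized choice $\exval{v(y_i)}=\min\{V^\star,B_i\}$ of Mechanism~\ref{mech: general} when no exact $y_i$ exists.

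First I would establish Pareto optimality, and this is exactly where Lemma~\ref{lemma:PO condition} is invoked. That lemma's characterization is driven only by the shape of the benefit profile---every player except a single top player receives $\min\{V^\star,(\text{feasibility cap})\}$---and our modification enforces precisely this, with $B_j$ taking the role of $\bigl|\bigcup_k x_k\setminus a_j\bigr|$. The exchange argument of the lemma then applies: any player whose benefit lies below both $V^\star$ and its feasibility cap could be raised by an increment without harming anyone, since every other player either cannot use that increment or only envies the top player; conversely the profile we produce admits no such improvement, so it is Pareto optimal.

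The substantive work, and the step I expect to be the main obstacle, is to verify that raising each reward from $\phi_i(S)$ up to $\min\{V^\star,B_i\}$ preserves truthfulness, so that the outcome remains a genuine equilibrium and not merely a Pareto-optimal point. Because every player's own benefit weakly increases and Mechanism~\ref{mech: general} is already truthful by Theorem~\ref{thm:general}, the ``correctness'' incentive to participate only strengthens; the danger is entirely in the exclusivity term, since raising the competitors to $V^\star$ inflates $\max_{j\neq i}v_j$. I would bound player $i$'s gain from defecting as in Theorem~\ref{thm:multiparty}: abstaining changes $i$'s utility by $-v_i+\bigl(\max_{j\neq i}v_j-\max_{j\neq i}\tilde v_j\bigr)$, where $\tilde v_j$ are the benefits in the coalition $S\setminus\{i\}$. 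For the top player, $v_i=V^\star$, $\max_{j\neq i}v_j\le V^\star$ and $\tilde v_j\ge 0$ give a non-positive change immediately. The hard case is a player capped below $V^\star$ (so $v_i=B_i<V^\star$): here one must show $\max_{j\neq i}\tilde v_j\ge V^\star-B_i$, i.e.\ that removing $i$ cannot lower the others' top achievable value by more than $V^\star-\bigl(V(S)-V(\{i\})\bigr)$. Establishing this for an arbitrary monotone subgroup value function $V(\cdot)$---where the marginal-contribution term $V(T)-V(T\setminus\{i\})$ inside $\phi_i(S)$ must be leveraged together with monotonicity to control the worst case, rather than relying on the clean set-size arithmetic available for union---is the delicate point, and I would expect to need the inner $\min$ in the definition of $\phi_i$ to certify the required bound.
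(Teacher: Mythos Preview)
The paper offers no proof for this Corollary at all: it is simply asserted as a consequence of Lemma~\ref{lemma:PO condition}, with the Pareto-optimality characterization applied at the level $V=\max_i\phi_i(S)$. Your Pareto-optimality argument is exactly that, so on the part the paper actually claims, you match the paper's intent.

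Where you go further than the paper is in worrying about truthfulness of the modified mechanism; the paper is silent on this. Your outline is correct, and the ``delicate point'' you flag can in fact be closed using machinery already present in the proof of Theorem~\ref{thm:general}. Let $k\in S$ attain $\phi_k(S)=V^\star$. Inequality~(\ref{ineq:Ai}) there gives
\[
\phi_k(S\setminus\{i\}) \;\ge\; \phi_k(S)-\phi_i(S) \;=\; V^\star-\phi_i(S).
\]
Under the modified mechanism on the coalition $S\setminus\{i\}$, player $k$'s benefit is $\tilde v_k=\min\{V^{\star\prime},B_k'\}$ with $V^{\star\prime}=\max_{j\in S\setminus\{i\}}\phi_j(S\setminus\{i\})\ge\phi_k(S\setminus\{i\})$ and $B_k'=V(S\setminus\{i\})-V(\{k\})\ge\phi_k(S\setminus\{i\})$ (take $T=S\setminus\{i\}$ in the definition of $\phi_k$). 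Hence $\tilde v_k\ge\phi_k(S\setminus\{i\})\ge V^\star-\phi_i(S)$. Since $\phi_i(S)\le B_i$ (take $T=S$), this yields
\[
\max_{j\neq i}\tilde v_j \;\ge\; V^\star-\phi_i(S)\;\ge\; V^\star-B_i,
\]
which is precisely the bound you said you needed for the capped player. So the truthfulness step does not require a new argument for general monotone $V(\cdot)$; it rides on inequality~(\ref{ineq:Ai}), whose proof already handles the inner $\min$ in $\phi_i$. In short: your plan is sound, and the missing link is not a new idea but a direct reuse of~(\ref{ineq:Ai}).
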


\begin{proof}[Proof of Theorem~\ref{thm:general}]
For every agent $i\notin S$, it holds that $x_i=\bot,y_i=\bot$ and hence $v_i=0$.
Thus Mechanism~\ref{mech: general} makes sure that non-participating agents gets zero information benefit.

The symmetry property is also clear since equivalent agents would have the same rewardable contribution.

Then, we prove the stability of the cooperation. 
For $i\in S$,  we have $x_i=a_i$ and $v(y_i)=\phi_i(S)$, where $\phi_i(S)$ is defined as in (\ref{def:phi_S}).
%
%
%
%
%
Such $y_i$ exists since:
(a) $v(y_i)\le V(S)-v(a_i)$, i.e.~there is sufficient information in the synthesis result of $S$ to reward $i$ with $\phi_i(S)$;
(b) let $\exval{v(y_i)}=\phi_i(S)$ for a randomized $y_i$ in case the range $v(\result)$ is discrete (perhaps because $\result$ is a discrete set or $v(\cdot)$ is segmented).
Intuitively, $\phi_i(S)$ refers to the maximal contribution of $i$ in every possible subset $T\subseteq S$,
and it is also bounded by the maximal available reward within $T$.

In the following, we prove that cooperating is a weakly dominant strategy for every single agent $i$.
Note that the utility of $i$ is $u_i =\phi_i(S)-\max_{j\in S\backslash \{i\}}\big\{ \phi_j(S)\big\}$ for $i\in S$ and it would be $\widehat{u}_i= -\max_{j\in S\backslash \{i\}}\big\{ \phi_j(S\backslash \{i\})\big\}$ if $i$ does not participate.
Then, $i$  will stay in the coalition as long as the following inequality holds:
 $$u_i=\phi_i(S)-\max_{j\in S\backslash \{i\}}\big\{ \phi_j(S)\big\}\ge -\max_{j\in S\backslash \{i\}}\big\{ \phi_j(S\backslash \{i\})\big\}=\widehat{u}_i$$

It suffices to show that for all other agent $j$ where $j\neq i$, 
\begin{align}
\phi_i(S)\ge  \phi_j(S) - \phi_j(S\backslash \{i\})
\label{ineq:Ai}
\end{align}
By the definition of $\phi_i,\phi_j$ as in (\ref{def:phi_S}), the above inequality (\ref{ineq:Ai}) transforms into 
\begin{small}
\begin{align*}
&\max_{T_1\subseteq S, i\in T_1}\Big\{
								\min\big\{ V(T_1)-V(\{i\}), V(T_1)-V(T_1\backslash \{i\})  
									\big\}  
							\Big\}\\
& \ge \max_{T_2\subseteq S, j\in T_2}\Big\{
								\min\big\{ V(T_2)-V(\{j\}), V(T_2)-V(T_2\backslash \{j\} ) 
									\big\}  
							\Big\}\\
& -\max_{T_3\subseteq S\backslash\{i\}, j\in T_3}\Big\{
								\min\big\{ V(T_3)-V(\{j\}), V(T_3)-V(T_3\backslash \{j\}  )
									\big\}  
							\Big\}
\end{align*}
\end{small}

Let $T'$ be the selected set $T_2$ that achieves the maximal in $\phi_j(S)$.
If $i\notin T'$, then the above inequality holds trivially since the right hand side is $0$ while the left hand side non-negative.
Now we focus on the case $i\in T'$, where it suffices to show that 
\begin{small}
\begin{align*}
\min\big\{ V(T')-V(\{i\}), V(T')-V(T'\backslash \{i\}  )
		\big\} 
\ge & \min\big\{ V(T')-V(\{j\}), V(T')-V(T'\backslash \{j\}  ) 
		\big\} -  \\
& - \min\big\{ V(T'\backslash\{i\})-V(\{j\}), V(T'\backslash\{i\})-V(T'\backslash \{i,j\} ) 
		\big\} 
\end{align*}
\end{small}

Recall that 
$\min\big\{ \alpha-\beta, \alpha-\gamma \big\} =\alpha-\max\set{ \beta, \gamma} $. Then 
the previous inequality can be simplified to
\begin{small}
\begin{align*}
V(T')- \max\big\{V(\{i\}), V(T'\backslash \{i\})  \big\} 
\ge & V(T')- \max\big\{V(\{j\}), V(T'\backslash \{j\} ) \big\} - \\
& - V(T'\backslash\{i\})+ \max\big\{V(\{j\}),V(T'\backslash \{i,j\})  \big\} \\
\end{align*}
\end{small}
and eventually it transforms to the inequality:
\begin{small}
\begin{align}\label{ineq:last}
\max\big\{V(\{j\}), V(T'\backslash \{j\})  \big\}   + V(T'\backslash\{i\}) 
\ge &\max\big\{V(\{i\}), V(T'\backslash \{i\})  \big\} + \max\big\{V(\{j\}),V(T'\backslash \{i,j\})  \big\} 
\end{align}
\end{small}

We prove inequality (\ref{ineq:last}) by case analysis:
\medskip

\begin{itemize}
\item \emph{Case 1:} $V(\set{j}) \ge V(T'\backslash \set{j} )$. 
Then note that 
\begin{small}
\[
V(T'\backslash \set{i} )\ge V(j)\ge V(T'\backslash \set{j} )\ge V(\set i) 
\]
\end{small}
and 
\begin{small}
\[
V(\set j) \ge V(T'\backslash \set j)\ge V(T'\backslash \{i,j\})
\]
\end{small}
Thus, the left hand side is 
\begin{small}
\[
LHS=V(\set{j})+V(T'\backslash \set{i})= V(T'\backslash \set{i})+V(\set{j} )=RHS.
\]
\end{small}

\item \emph{Case 2:} $V(\set{j}) < V(T'\backslash \set{j} )$. 
Then the left hand side is $LHS=V(T'\backslash \set{j})+V(T'\backslash \set{i})$.
Notice that the maximal of $V(T'\backslash \set{j})$ and $V(T'\backslash \set{i})$ satisfies $$\max\{V(T'\backslash \set{j}), V(T'\backslash \set{i})\}\ge \max\big\{V(\set{i}), V(T'\backslash \set{i} )  \big\},$$
while the other satisfies the inequality 
$$\min\{V(T'\backslash\set j),V(T'\backslash \set i)\}\ge \max\{ V(\set j), V(T'\backslash \{i,j\} \}$$
Therefore we have the left hand side (LHS) is greater than or equal to the right hand side (RHS).
\end{itemize}

Thus, we have proved inequality (\ref{ineq:last}) which implies (\ref{ineq:Ai}), and hence $u_i\ge \widehat{u}_i$.
As a result, cooperating is the rational choice for every agent $i$ in the single-envy and all-or-nothing model, which completes the proof.
\end{proof}

We remark that in this mechanism the most beneficial agent gets a reward bounded by his own contribution (i.e. $\max_{T\subseteq S}\set{V(T)-V(T\bs\set{i})}$) and hence not necessarily the maximal feasible (i.e. $V(S)-V(\set{i})$).

\end{document}